\theoremstyle{thmstyleone}%
\newtheorem{corollary}{Corollary}
\newtheorem{lemma}{Lemma}
\newtheorem{theorem}{Theorem}
\theoremstyle{thmstyletwo}%
\theoremstyle{thmstylethree}%
\newtheorem{definition}{Definition}%
\begin{document}

\title[Article Title]{Biochemical Computing Mode for Sequential Logic}


\author[1,2]{\fnm{Han} \sur{Huang}}\email{hanhuang@foxmail.com}
\equalcont{These authors contributed equally to this work.}

\author[3]{\fnm{Chengzhi} \sur{Ma}}\email{se21mcz@mail.scut.edu.cn}
\equalcont{These authors contributed equally to this work.}

\author*[3]{\fnm{Yuxin} \sur{Zhao}}\email{yuxinzhaozyx@163.com}

\author*[3]{\fnm{Qingyao} \sur{Wang}}\email{Qingyao0029@163.com}
 
\author[4]{\fnm{Xinglong} \sur{Xiao}}\email{fexxl@scut.edu.cn}

\author[5,6]{\fnm{Xiulin} \sur{Shu}}\email{shuxl@gdim.com}

\author[7]{\fnm{Zhifeng} \sur{Hao}}\email{haozhifeng3377@163.com}

\affil[1]{\orgdiv{Key Laboratory of Big Data and Intelligent Robot}, \orgname{MOE of China},  \orgaddress{\city{Guangzhou}, \postcode{510006}, \country{China}}}

\affil[2]{\orgdiv{Key Laboratory of Symbolic Computation and Knowledge Engineering of Ministry of Education}, \orgname{Jilin University}, \orgaddress{ \city{Changchun}, \postcode{130012},  \country{China}}}

\affil*[3]{\orgdiv{School of Software Engineering}, \orgname{South China University of Technology}, \orgaddress{\city{Guangzhou}, \postcode{510006},  \country{China}}}

\affil[4]{\orgdiv{School of Food Science and Engineering}, \orgname{South China University of Technology},\orgaddress{\city{Guangzhou}, \postcode{510006},  \country{China}}}

\affil[5]{\orgdiv{Institute of Microbiology}, \orgname{Guangdong Academy of Sciences},\orgaddress{\city{Guangzhou}, \postcode{510070},  \country{China}}}

\affil[6]{\orgdiv{State Key Laboratory of Applied Microbiology Southern China}, \orgaddress{\city{Guangzhou}, \postcode{510070},  \country{China}}}

\affil[7]{\orgdiv{Department of Mathematics}, \orgname{Shantou University},\orgaddress{\city{Shantou}, \postcode{515063},  \country{China}}}


\abstract{Recent years have witnessed the growing scholarly interest in the next-generation general-purpose computers. Various innovative computing modes have been proposed, such as optical, quantum phenomena, and DNA-based modes. Sequential logic circuits are a critical factor that enables these modes to function as general-purpose computers, given their essential role in facilitating continuous computation and memory storage through their ability to store states. However, compared to computability, it is often overlooked due to the difficulty of its implementation. In this paper, we first demonstrate sequential mapping, a crucial necessary condition for electronic computers to realize sequential logic circuits, and highlight this distinctive property of general-purpose computers in the context of logic gate circuits. To achieve computational functionalities comparable to those of electronic computers, we utilize the control effect of enzymes on enzymatic reactions to design a logic gate model that is composed of small molecules and driven by enzymes, subsequently propose a biochemical computing mode. Furthermore, we mathematically analyze the static and dynamic input-output properties of biochemical logic gate components and prove that the biochemical computing mode satisfies sequential mapping similar to electronic computers. When combined with the storage characteristics of NOT-AND gates, it can realize sequential logic circuits. The findings can serve as a theoretical foundation for developing general-purpose biochemical computers.}

\keywords{biochemical computing mode, sequential mapping, sequential logic, biochemical combinational logic circuit}



\maketitle

\section{Introduction}\label{sec1}
The advent of electronic computers has exerted a profound impact on human life, accelerating societal and scientific development. In 1946, ENIAC, the first electronic computer, was introduced. However, it relied on temporary storage, requiring all switches to be reset for each computation. In the same year, Neumann proposed the “stored-program computer” \cite{von1993first}, which significantly improved computational efficiency and endowed electronic computers with sequential mapping. The von Neumann architecture laid the groundwork for modern computer architecture, along with hardware and software ecosystems. As physical hardware and computer architectures have evolved, electronic computers have undergone four critical phases: vacuum tubes, transistors, large-scale integration, and ultra large-scale integration integration. At present, due to limitations in manufacturing processes and power consumption, the performance growth of electronic computers is nearing a bottleneck, and Moore’s Law \cite{markov2014limits} has begun to falter. Consequently, next-generation computing has become a hot topic, resulting in novel computing modes such as optical, biological, and quantum computing. Studies have begun to investigate their potential in realizing the next-generation computers.

At the dawn of electronic computing, computers were merely regarded as enormous calculators devoid of the concept of time. However, as computers evolved and task complexity increased, scientists had to consider the behavior of systems over time, leading to the introduction of sequential logic circuits. According to digital circuit theory, a sequential logic circuit refers to a circuit whose steady-state output at any moment depends not only on the current input but also on the state formed by the previous input. In contrast, the output of combinational logic is solely a functional relationship with the current input \cite{katz2005contemporary}. In other words, sequential logic circuits incorporate the concept of time through the use of memory, significantly enhancing the versatility of computers. Scientists further designed the logical structure of computers to realize general-purpose computers. The logical structure proposed by Neumann \cite{burks1946preliminary} utilized the reusability of electronic logic gates and employed the concept of program storage to separate hardware from software. This separation allows programmers to continuously input instructions, enabling the computer to operate through logic circuits without the need to rearrange circuit boards or reset switches. Until now, the principle of stored programs remains a fundamental aspect of computer operation, and many new computing modes continue to base their modular designs on this principle.  As shown above, the application of sequential logic circuit has significantly advanced the development of general-purpose computers.

There are numerous hypotheses regarding the computing modes for next-generation general-purpose computers. Among them, the most developed and mainstream modes include optical computing \cite{mcmahon2023physics}, DNA computing based on biochemical reactions \cite{fan2020propelling}, and quantum computing grounded in quantum mechanics \cite{zhu2024quantum}. We will introduce each of these modes individually in the following.

Optical computing utilizes photons as carriers for information transfer, which replaces electrons with photons and thus enables optical computation instead of electronic computation. Studies on optical computing began as early as the 1950s \cite{ambs2010optical}, when researchers explored the use of optical devices for information processing and computation. Later in 1963, advancements in technologies such as lasers and optical fibers led researchers to focus on constructing high-performance general-purpose optical computers. Due to the faster speed of photon compared to electrons, optical computers can outperform electronic computers in terms of speed, while also offering advantages such as enhanced fault tolerance, higher processing precision, and lower energy consumption. Currently, researchers primarily leverage the ample dissipation and parallelism of photons to address NP-complete problems \cite{xu2020scalable,vazquez2018optical}. Compared to electronic computers, photon-based computers significantly reduce the time required to solve such problems. Furthermore, researchers have applied optical computation to the training of neural networks and decreased energy consumption \cite{wright2022deep}. In addition, optical storage materials \cite{burks1946preliminary} have also become a research focus. The current realization of optical computation relies on the high-speed and high-precision conversion between optical and electrical signals. If the storage of photon information can be achieved, it would enable the independent execution of computation and storage processes by photons, eliminating the need for intermediate electrical signal conversion and facilitating efficient pure optical signal computation. However, current studies on optical computing focus more on specialized analog signal computing rather than general-purpose Boolean logic computing  \cite{mcmahon2023physics}, since the performance of optical computing elements often lags behind that of electronic computing elements.

DNA computing is a mode that utilizes DNA molecules and associated chemical substances as fundamental materials, relying on biochemical reactions for its operations. In 1994, Adleman first proposed the DNA computing model 
 \cite{adleman1994molecular}, which enabled access to this innovative field. In 2004, Okamoto et al. \cite{okamoto2004dna} first integrated digital circuits with DNA computing to construct DNA circuits, forming DNA logic gates that combine different levels of logic gates in a cascading manner to create complex circuits. In 2009, Li et al. \cite{Li2001dna} explored the fundamental principles of DNA computing. Current improvements in DNA logic gate models focus primarily on reaction speed and reusability. For instance, Song et al. \cite{song2019fast} introduced a DNA logic gate model with faster reaction rates and fewer DNA strands. Chatterjee et al. \cite{chatterjee2017spatially} enhanced the likelihood of strand collisions by constraining the reaction space of DNA. Song et al. \cite{song2017renewable} proposed a reusable DNA toggle logic circuit where the toggle reaction is reversible. However, due to the irreversibility of circuit components, the toggle gates remain non-reusable. Additionally, Lv et al. \cite{lv2023dna} designed a programmable DNA integrated circuit that can realize over ten billion different circuits through the programming of 24 addressable DNA logic gates. Compared to electronic computing, DNA computing has inherent advantages such as its parallel processing capability, along with high energy efficiency and large storage capacity \cite{Shan2021dna}. Current research on DNA computing focuses on the derivation of biochemical modes \cite{vishweshwaraiah2021two}, applications of DNA computing \cite{ravichandran2021efficient}, and the realization of DNA computers. In addition, only a few studies have investigated its sequential logic, such as constructing pseudo-bistable structures using multiple strand displacement reactions \cite{burks1946preliminary} to satisfy sequential logic.

 Quantum computing is a novel computing mode based on quantum mechanics theory. In 1982, American physicist Feynman first introduced the concept of “quantum simulation” \cite{feynman2018simulating}, which involves using devices that operate according to quantum mechanics to simulate the evolution of quantum systems. This was followed by British physicist Deutsch’s proposal of the “quantum Turing machine” \cite{deutsch1985quantum}, which can be equivalently represented as a quantum circuit model. Since then, research on quantum computers has aroused scholarly interest, which leads to the emergence of various models of quantum computing \cite{chamberland2022building,alexeev2021quantum}. Compared with electronic computing, quantum computing has two main advantages: 1) quantum parallelism and 2) quantum entanglement. These two characteristics offer exponential speed advantages to quantum algorithms compared to classical algorithms. Currently, relevant research focuses on quantum algorithms, applications, and physical realization of quantum computing. For example, in terms of physical realization, Chamberland et al. proposed the use of cascaded cat codes to construct fault-tolerant quantum computers, which achieved high fidelity for Toffoli gates with lower qubit costs \cite{chamberland2022building}. Youngseok et al. \cite{kim2023evidence} conducted experiments on a quantum processor with 127 qubits to investigate the noise resistance and fault tolerance of quantum computers. In recent years, Pezzagna et al. suggested using nitrogen-vacancy centers in diamonds as effective room-temperature qubits for diamond-based quantum computers  \cite{pezzagna2021quantum}. At present, quantum computers outperform traditional electronic computers in solving quantum simulation problems. For example, Gabriel et al. used fermionic quantum turbulence to break through the limits of high-performance computing, enabling desktop quantum experiments to simulate complex dynamics  \cite{wlazlowski2024fermionic}. They also have practical applications in materials development for batteries and industrial catalysts \cite{daley2022practical}. Overall, studies on quantum computers tend to focus on material selection, improvements in quantum algorithms, as well as noise and fault tolerance. However, few studies have mentioned sequential logic, which is one of the key elements in constructing a general-purpose quantum computer.

 Compared to existing electronic computing modes, these novel modes offer significant advantages in parallelism, which in most cases enables them to achieve higher efficiency when tackling problems with super-polynomial time complexity. Studies in various fields have proposed numerous hypotheses and models to achieve next-generation computers. However, limited research has focused on the fulfillment of sequential logic, suggesting the need for further investigation. For instance, optical memory devices are among the difficult optical components to realize. Recent studies have combined silicon photonics with phase-change materials to develop non-volatile integrated optical memories, though issues related to fabrication difficulties and stability persist \cite{chen2023neuromorphic}. In DNA computing, The inability of reactants or circuit components to change states multiple times hinders operations similar to electronic logic gates, posing a challenge to achieving temporal consistency \cite{song2019fast}. Although quantum computing has considered the design and implementation of quantum registers, it still struggles with maintaining long-term and accurate storage similar to electronic registers \cite{bradley2019ten}. Sequential logic is one of the essential conditions for constructing classical computers, and it enables classical computers to perform continuous computations and achieve storage without the need to reset all switches before each computation, as was required in the first electronic computers \cite{burks1981first}. Therefore, to realize a sustainable and user-friendly general-purpose computer, the prerequisite is to equip it with sequential logic.

 In this paper, inspired by Adleman’s insights into the potential of molecular scientific computing  \cite{adleman1998computing}, and in light of the non-consuming, specific, and highly efficient catalytic properties demonstrated by enzymes in molecular computing \cite{chaplin1990enzyme}, in order to achieve sequential logic, we propose the property of sequential mapping, a critical requirement to achieve this logic. Based on these insights, we explore the potential of biochemical computing modes, leveraging the unique catalytic properties of enzymes to design a novel approach that achieves sequential logic and meets the requirements of a general-purpose computer. Furthermore, we demonstrate the feasibility of biochemical computing modes in realizing general-purpose computers by fulfilling the property of sequential mapping.

The main contributions of the paper are:
\begin{itemize}
    \item{\textbf{Sequential mapping.} We propose the concept of sequential mapping and prove that electronic computers satisfy sequential mapping. Furthermore, we reveal that sequential mapping is a necessity for sequential logic, which is crucial for constructing general-purpose computers.}
    \item{\textbf{Biochemical computing mode.} We propose a new computing mode, namely biochemical computing mode, which utilizes the control capabilities of enzymes over biochemical reactions to achieve sustainable reuse of logic gates. Additionally, we construct reusable biochemical logic gate circuits and realize biochemical combinational logic, which exhibits potential advantages such as non-consumption, specificity, and high-efficiency catalysis.}
    \item{\textbf{Sequential logic based on biochemical computing mode.} We demonstrate that the biochemical logic gate circuits based on biochemical computing mode satisfy the sequential mapping property, thereby realizing sequential logic. This proves the feasibility of biochemical computing mode to become a general-purpose computer, similar to electronic computers.}
\end{itemize}

The remainder of this paper is organized as follows. In Section 2, we introduce the definition of sequential mapping and prove that electronic computers possess sequential mapping that meets this definition. Section 3 presents biochemical computing mode, designs and constructs basic logic gate models based on enzymatic reactions and biochemical computation, including the construction and implementation of biochemical NOT, AND, and OR gate models. Section 4 analyzes the dynamic input and output characteristics of biochemical logic gates based on biochemical computing mode, proving that biochemical computing mode satisfies sequential mapping and realizes sequential logic. The conclusion and future prospects of biochemical computing mode are given in Section 5.

\section{Sequential Mapping of Electronic Computers}\label{sec2}
Sequential logic is a crucial prerequisite for achieving the universality of computers. Constructing a sequential logic circuit \cite{katz2005contemporary} necessitates two major conditions, one of which is the requirement for steady-state output, i.e., the ability to stably output a signal within a given period of time. To investigate this necessary condition mentioned previously, we propose a formal definition of sequential mapping and explicate its intrinsic relationship with sequential logic. Subsequently, the sequential mapping of electronic logic circuits is systematically examined.

\subsection{Sequential Mapping and Logic Circuits}
Due to the different characteristics of various computational carriers, the computational functionality implemented by their corresponding circuits cannot be guaranteed to be perfectly equivalent to the ideal computing results. For example, due to its physical properties, the electronic circuits of electronic computers often have certain errors in computing results compared with the intended computational functionality. We believe that the nature of such errors directly determines the universal computing capability of the computing carrier. To study this nature, we propose the characteristic of sequential mapping, which is defined as follows:

\begin{definition}[Sequential mapping]
Given a specified positive real number $\kappa$, a circuit $S$ and a sequential mapping function $f_S$, the circuit $S$ is said to satisfy sequential mapping if there exists a positive real number $\tau_S$ such that for any input signals $X_1,\dots,X_n$ and their sequential mapping functions $f_{X_1},\dots,f_{X_n}$, the output of the circuit $S(X_1,\dots,X_n,t)$ satisfies: $\forall t  (  (|S(X_1,\dots,X_n,t)-f_S(f_{X_1},\dots,f_{X_n},t)|>\kappa) \rightarrow (|S(X_1,\dots,X_n,t+\tau_S)-f_S(f_{X_1},\dots,f_{X_n},t+\tau_S)|<\kappa)  ) $.
\end{definition}
Sequential mapping function $f_S$ represents the intended
computational functionality of the circuit $S$ and must be sustained over an extended period according to the definition of sequential mapping.  $\tau_S$ indicates the total computational delay of the circuit $S$ and its input signal, $f_S(f_{X_1},\dots,f_{X_n},t)$ denotes the expected output of the circuit $S$ at time $t$, and $S(X_1,\dots,X_n,t)$ represents the actual output of the circuit at that time. This definition implies that, under the constraints of a given error $\kappa$ and delay $\tau_S$, if a computational circuit $S$ can realize the function $f_S$ and meet the sequential mapping, then the output of $S$ will maintain its deviation from the sequential mapping function $f_S$ within the error margin $\kappa$ over a long period. Moreover, the definition allows for short-term deviations beyond $\kappa$, provided that the output of the circuit returns to within the acceptable range of $\kappa$ in at most $\tau_S$ time units.

Sequential mapping necessitates that circuits preserve correct computational functionality at any time $t$. A sequential logic circuit is defined as a system whose steady-state output at any moment depends not only on the current inputs but also on the previous input states \cite{sep-turing-machine}. While sequential logic circuits enable the implementation of complex digital systems—such as state machines, memory units, and counters—their design cannot rely exclusively on sequential mapping. Instead, they require cascaded logic gates to achieve full functionality. Since OR and NOT gates are functionally complete (i.e., capable of constructing any combinational logic circuit), they constitute fundamental requirements for the operational validity of sequential logic systems. To formalize this principle, we propose a general theorem.

 \begin{theorem}
Circuits that satisfy sequential mapping while exhibiting the computing capability of NOT-AND gate effectively realizes sequential logic circuits.
\label{theorem-1}
\end{theorem}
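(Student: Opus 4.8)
The plan is to decompose the claim into two ingredients that, together, are classically known to be sufficient for sequential logic: (i) functional completeness of the combinational part, and (ii) the existence of a stable one-bit memory element. For (i), I would invoke the standard fact from digital-circuit theory that the NOT-AND operation is functionally complete: any Boolean function, and hence any combinational logic circuit, can be synthesized purely from NOT-AND gates, via $\mathrm{NOT}(x)=\mathrm{NAND}(x,x)$, $\mathrm{AND}(x,y)=\mathrm{NAND}(\mathrm{NAND}(x,y),\mathrm{NAND}(x,y))$, and $\mathrm{OR}(x,y)=\mathrm{NAND}(\mathrm{NAND}(x,x),\mathrm{NAND}(y,y))$. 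Since the hypothesis grants a circuit exhibiting the computing capability of a NOT-AND gate, the combinational-logic requirement for a sequential circuit is met.

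For (ii), I would exhibit the canonical cross-coupled construction: feed the outputs of two NOT-AND units back into one another's inputs to form an SR-latch-type structure, and show that, regarded as a map on the output pair, this feedback network has exactly two stable equilibria, corresponding to the stored bit values $0$ and $1$; this is a short Boolean computation on the gate's truth table. The subtle and genuinely load-bearing step is to argue that these equilibria are not merely formal fixed points but are physically maintained over time, and this is exactly where the sequential-mapping hypothesis enters. I would take the sequential mapping function $f_S$ of the latch to be the constant function equal to the currently stored value, and invoke the Definition: any transient perturbation that pushes the output more than $\kappa$ away from the stored value is driven back within $\kappa$ after at most $\tau_S$ time units. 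Hence the latch holds its state over arbitrarily long intervals despite noise, which upgrades ``bistable in principle'' to a working memory cell.

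Finally, I would assemble the two ingredients in the textbook manner: a stable latch together with enable/clock logic (itself combinational, hence realizable by step (i)) yields a gated flip-flop; a bank of flip-flops yields a register, i.e.\ a state variable; and a register together with combinational next-state and output logic (again from step (i)) is precisely the Huffman (Mealy--Moore) normal form of an arbitrary sequential logic circuit, whose steady-state output at each moment depends both on the current input and on the stored previous state. Since every component of this assembly is either a NOT-AND-derived combinational block or a NOT-AND-derived latch, and since sequential mapping is what guarantees each such block and latch holds its intended value stably in time, the composed circuit satisfies the defining property of a sequential logic circuit.

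The main obstacle I anticipate is the composition issue in passing from (ii) to the final assembly: sequential mapping as defined is a property of a circuit $S$ relative to a given intended function $f_S$, so one must verify that routing outputs back as inputs, and cascading many such blocks, preserves the sequential-mapping property with a possibly larger but still finite delay $\tau_S$. The delicate point is the clause in the Definition permitting short-term excursions beyond $\kappa$: one must ensure that such an excursion in one stage is re-absorbed rather than cascading into permanent corruption downstream. I would handle this by bounding the accumulated delay across the finite depth of the circuit and arguing that the deviation is restored to within $\kappa$ within that aggregate bound, so that the stored state of the overall machine is itself self-correcting.
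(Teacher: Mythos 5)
Your proposal is correct and follows essentially the same route as the paper: the load-bearing construction in both is the cross-coupled NOT-AND pair (RS flip-flop), with the sequential-mapping hypothesis invoked to guarantee that the stored state, once perturbed beyond $\kappa$, is restored within the delay $\tau_S$ and hence held stably, so that the output depends on both the current input and the previous state. The only difference is one of scope: the paper stops as soon as the single cross-coupled pair meets the definition of a sequential logic circuit, whereas you go on to assemble registers and the full Huffman normal form and to flag the feedback-composition issue explicitly (which the paper only addresses for feedforward cascades, in Lemma~2) --- extra care that strengthens rather than changes the argument.
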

\begin{proof}
    Since a circuit with NOT-AND gate computing capabilities can construct any Boolean logic circuit through cascading, we can build an RS flip-flop\cite{katz2005contemporary}, which preserves $t$-th Boolean signals from $t$-th inputs and $(t-1)$-th Boolean signals, using cross-coupled NOT-AND gates. Let circuit $S$ be composed of two NOT-AND gates with their inputs and outputs cross-coupled, where the inputs are $X_1$ and $X_2$. Therefore, the intended computational functionality of the circuit $S$ is denoted as $f_S(f_{X_1},f_{X_2},t)=X_1(t)+X_2(t)f_S(f_{X_1},f_{X_2},t-1)$ for the sake of generality. Considering that $S(X_1,X_2,t)$ often has certain error $\kappa$ in computing results compared with $f_S(f_{X_1},f_{X_2},t)$, $S(X_1,X_2,t)$ depends on $X_1(t)+X_2(t)S(X_1,X_2,t-1)$ if for a certain error $\kappa$, $\exists t_1 > 0, \forall t >t_1, |S(X_1,X_2,t)-f_S(f_{X_1},f_{X_2},t)| < \kappa$ while for $\forall t < t_1, |S(X_1,X_2,t)-f_S(f_{X_1},f_{X_2},\\t)| > \kappa$. Because $S$ satisfy the sequential mapping of Definition 1, for the given $\kappa$ and $f_S(f_{X_1},f_{X_2}t)$, there exists a positive real number $\tau_S$ as a propagation delay such that $\forall t  (  (|S(X_1,X_2,t)-f_S(f_{X_1},f_{X_2},t)|>\kappa) \rightarrow (|S(X_1,X_2,t+\tau_S)-f_S(f_{X_1},f_{X_2},t+\tau_S)|<\kappa)  ) $. Let $t_2 = t_1+ \tau_S$, we have that $\forall t > t_2, |S(X_1,X_2,t)-f_S(f_{X_1},f_{X_2},t)| < \kappa$. Thus, $\forall t > t_2$, $S(t_2)$ depends on $X_1(t)+X_2(t)S(X_1,X_2,t-1)$. Therefore, the circuit $S$ preserves $t$-th Boolean signals from $t$-th inputs and $(t-1)$-th Boolean signals, it satisfies the definition of a sequential logic circuit\cite{katz2005contemporary}. Hence, circuits that satisfy sequential mapping and possess NOT-AND gate computing capabilities can implement sequential logic circuits.
\end{proof}

Following the formal delineation of sequential mapping and its intrinsic relationship with sequential logic circuits, this section advances to demonstrate the sequential mapping  of electronic logic circuits.

\subsection{Sequential Mapping of Electronic Logic Circuits}
In electronic computers, electronic logic gates represent “0” and “1” signals using high and low voltage levels, and their behavior depends on the physical components that constitute them. Circuits composed solely of cascaded electronic NOT-AND gates can equivalently implement all other Boolean logic functions. Thus, demonstrating that both the input-output mapping and cascading of the electronic NOT-AND gate satisfy sequential mapping suggests that electronic logic circuits may exhibit sequential mapping.

\begin{lemma}
    The input-output mapping of the electronic NOT-AND gate satisfies sequential mapping.
    \label{lem-1}
\end{lemma}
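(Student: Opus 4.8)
The plan is to model the electronic NAND gate as a physical device whose output voltage $V_{\mathrm{out}}(t)$ relaxes toward the Boolean value $\mathrm{NAND}(V_1(t),V_2(t))$ with a bounded propagation delay, and then to exhibit a concrete $\tau_S$ witnessing Definition 1. First I would fix the sequential mapping function $f_S(f_{X_1},f_{X_2},t)=\overline{X_1(t)\wedge X_2(t)}$, i.e. the ideal NAND of the (thresholded) input signals, and fix the error tolerance $\kappa$ from the statement as the noise margin separating a valid logic level from the switching region. Then I would invoke the standard first-order RC (or more generally, bounded-slew) model of the transistor network: once the inputs have settled to stable logic levels, the output converges monotonically to the correct rail, and the time to come within any fixed voltage band $\kappa$ of that rail is bounded above by a constant determined solely by the gate's $RC$ time constant and drive strength — call this bound $\tau_{\mathrm{prop}}$. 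This constant does not depend on which input pattern is applied (there are only four), so we may take $\tau_S = \tau_{\mathrm{prop}}$ (or the max over the four cases, plus the input settling time).

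The key steps, in order: (1) State the device model — inputs are thresholded to Boolean values, the transistor network implements NAND, and the output node is an $RC$ load, giving $V_{\mathrm{out}}(t)$ an exponential approach to the target rail with time constant $T_{RC}$. (2) From the exponential, extract the explicit delay: $|V_{\mathrm{out}}(t)-V_{\mathrm{target}}| < \kappa$ for all $t \ge t_0 + T_{RC}\ln(\Delta V/\kappa)$, where $t_0$ is the last input transition and $\Delta V$ the rail swing; set $\tau_S$ to this quantity (maximized over the finitely many input combinations and including any input propagation delay). (3) Verify the logical form of Definition 1: suppose at some time $t$ we have $|S(X_1,X_2,t)-f_S(\cdots,t)|>\kappa$; this deviation can only be caused by a recent input change within the last $\tau_S$ units (outside such a window the output has already relaxed), hence by step (2) the output is within $\kappa$ of the ideal value by time $t+\tau_S$, establishing the implication. (4) Conclude that the NAND gate satisfies sequential mapping with this $\tau_S$.

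The main obstacle is step (3): making the implication in Definition 1 hold for \emph{every} $t$, not merely for $t$ immediately following a single isolated input transition. If the inputs glitch repeatedly — changing again within the window $(t, t+\tau_S)$ — the naive argument that "the output settles $\tau_S$ after the last transition" could fail, because a new transition may arrive before settling completes. I would handle this by arguing that $f_S$ itself tracks these glitches (it is defined pointwise from the thresholded inputs), so the relevant quantity is the instantaneous gap between $V_{\mathrm{out}}$ and the current target, and the $RC$ dynamics guarantee this gap is contractive: the output always moves toward the current target at a rate bounded below whenever it is outside the $\kappa$-band, so after $\tau_S$ units it is inside the band regardless of intervening input history — provided $\tau_S$ is chosen to dominate the worst-case excursion of the first-order system, i.e. a full rail-to-rail charge time to within $\kappa$. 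This is the one place where the bounded-slew (rather than purely exponential) formulation is cleanest, since it gives a uniform lower bound on the approach speed outside the band and hence a hard deadline $\tau_S$ independent of the input trajectory.
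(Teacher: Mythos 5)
Your overall route is more physically detailed than the paper's. The paper simply posits that the gate output at time $t+t_d$ exactly equals $1-X_1(t)X_2(t)$, sets $\tau_S=t_d$, and assumes the inputs are held constant over any interval longer than $\tau_S$, so the deviation from $f_S$ is exactly zero after one propagation delay. Your RC/bounded-slew model, with $\tau_S$ extracted as an explicit settling time of the form $T_{RC}\ln(\Delta V/\kappa)$ maximized over the four input patterns, is a legitimate refinement: it derives a quantitative delay rather than postulating one, and it at least attempts to honor the ``for every $t$'' quantifier in Definition 1 instead of quietly restricting the input class.

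There is, however, a genuine gap in your step (3), precisely at the point you flag as the main obstacle. The claim that the gap between $V_{\mathrm{out}}$ and the \emph{current} target is contractive ``regardless of intervening input history'' is false for a first-order system: each input transition instantaneously moves the target by up to the full rail swing, re-inflating the gap, and if transitions recur faster than the relaxation can absorb them (for instance, the thresholded inputs alternate with period much smaller than $T_{RC}$), the output can hover near mid-rail indefinitely, so that $|S(t)-f_S(t)|>\kappa$ and $|S(t+\tau_S)-f_S(t+\tau_S)|>\kappa$ hold simultaneously for every $t$ and every finite $\tau_S$. A uniform lower bound on the approach speed outside the $\kappa$-band does not rescue this, because the \emph{direction} of approach reverses whenever the target flips. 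To close the argument you must restrict the admissible inputs --- which is exactly what the paper does by stipulating that the input signals are constant on every interval of length greater than $\tau_S$, i.e., that the dwell time between transitions exceeds the propagation delay. With that hypothesis made explicit, your step (2) already completes the proof; without it, no choice of $\tau_S$ satisfies Definition 1 for arbitrary input trajectories.
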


\begin{proof}
    The truth table for the electronic NOT-AND gate is presented in table \ref{tab1}. Let the inputs of the electronic NOT-AND gate $S$ at time $t$ be $X_1(t)$ and $X_2(t)$, and the output after the propagation delay $t_d$ be $Y(t+t_d)$, then we have $Y(t+t_d)=S(X_1,X_2,t+t_d)$. Based on Table \ref{tab1}, we can define a sequential mapping function $f_S$ for the electronic NOT-AND gate as $f_S(X_1,X_2,t)=1-X_1(t) X_2(t)$. According to the definition of the input-output mapping for the electronic NOT-AND gate, we have: $Y(t+t_d)=S(X_1,X_2,t+t_d)=1-X_1(t) X_2(t)$. Letting $\tau_S=t_d$, for any input interval greater than $\tau_S$, the input signals are constants, we have $X(t+\tau_S) = X(t)$. In this input interval, the output of the electronic NOT-AND gate satisfies: $S(X_1,X_2,t+\tau_S)=1-X_1(t)X_2(t)=1-X_1(t+\tau_S)X_2(t+\tau_S)=f_S(X_1,X_2,t+\tau_S)$. Letting $f_{X_1}=X_1$ and $f_{X_2}=X_2$, we have: $|S(X_1,X_2,t+\tau_S)-f_S(f_{X_1},f_{X_2},t+\tau_S)|=|S(X_1,X_2,t+\tau_S)-f_S(X_1,X_2,t+\tau_S)|=0<\kappa$. Thus, we have $\forall t  ( (|S(X_1,X_2,t)-f_S(f_{X_1},f_{X_2},t)|>\kappa) \rightarrow (|S(X_1,X_2,t+\tau_S)-f_S(f_{X_1},f_{X_2},t+\tau_S)|<\kappa) )$. It now suffices to show that the electronic NOT-AND gate satisfies sequential mapping. 
\end{proof}
\begin{table}[t!]
\centering
\captionsetup{width=\textwidth}
\caption{Truth table for the electronic NOT-AND gate}
\label{tab1}
\begin{tabular}{cccc}
\toprule
Input  & Input   & Output  & Sequential mapping function \\
$X_1(t)$ & $X_2(t)$ & $Y(t+t_d)$ & $f_S(X_1, X_2, t)$ \\
\midrule
        0 & 0 & 1 & 1\\
        0 & 1 & 1 & 1\\
        1 & 0 & 1 & 1\\
        1 & 1 & 0 & 0\\
\botrule
\end{tabular}
\end{table}

Since any Boolean logic function can be equivalently implemented using electronic NOT-AND gates and their cascades, we can prove sequential mapping of electronic logic circuits in two steps. First, we need to prove that the electronic NOT-AND logic gate satisfy the sequential mapping. Then, we can demonstrate that the electronic logic circuit also satisfies sequential mapping.
    
\begin{lemma}
If mappings $S$ and $G$ satisfy the sequential mapping, the cascaded mapping $S^\prime=G(S)$ also satisfies it. \label{lem-2}
\end{lemma}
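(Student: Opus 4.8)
The plan is to unwind the definitions of sequential mapping for $S$, $G$, and $S' = G(S)$, and to exhibit an explicit delay $\tau_{S'}$ built from $\tau_S$ and $\tau_G$. First I would fix the specified error bound $\kappa$ and let $\tau_S$ be the delay witnessing that $S$ satisfies sequential mapping, and $\tau_G$ the delay for $G$. The natural candidate is $\tau_{S'} = \tau_S + \tau_G$: intuitively, after $\tau_S$ time units the inner mapping $S$ has settled to within $\kappa$ of its sequential mapping function $f_S$, and then after a further $\tau_G$ time units the outer mapping $G$, fed this now-correct input, settles to within $\kappa$ of $f_G$ evaluated on it, so that $S' = G(S)$ tracks $f_{S'} = f_G(f_S, \cdot)$. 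Accordingly, I would define $f_{S'}(f_{X_1},\dots,f_{X_n},t) = f_G(f_S(f_{X_1},\dots,f_{X_n},\cdot),t)$ as the sequential mapping function of the cascade.

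The core of the argument is then to verify the implication in Definition~1 for $S'$ with this $\tau_{S'}$: assuming $|S'(X_1,\dots,X_n,t) - f_{S'}(\dots,t)| > \kappa$ at some time $t$, I would show $|S'(X_1,\dots,X_n,t+\tau_{S'}) - f_{S'}(\dots,t+\tau_{S'})| < \kappa$. I would argue in two stages. Applying the sequential mapping property of $S$ gives that at time $t + \tau_S$ the signal $S(X_1,\dots,X_n,\cdot)$ is within $\kappa$ of $f_S$; in fact, since the input interval is assumed longer than the delays (the inputs are held constant between switching events, as in Lemma~\ref{lem-1}), $S$ remains within $\kappa$ of $f_S$ on the whole interval $[t+\tau_S, t+\tau_S+\tau_G]$. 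Then $S(X_1,\dots,X_n,\cdot)$ serves as an input signal to $G$ whose own sequential mapping function is $f_S$, so applying the sequential mapping property of $G$ with delay $\tau_G$ starting from time $t+\tau_S$ yields $|G(S(\dots),\cdot) - f_G(f_S,\cdot)| < \kappa$ at time $t + \tau_S + \tau_G = t + \tau_{S'}$, which is exactly $|S'(\dots,t+\tau_{S'}) - f_{S'}(\dots,t+\tau_{S'})| < \kappa$.

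The main obstacle I anticipate is the bookkeeping around composing the two error bounds cleanly: the definition as stated bounds $|S - f_S|$ by the \emph{same} constant $\kappa$ for every circuit, so I need to make sure that feeding $G$ an input that is only $\kappa$-close to $f_S$ (rather than exactly equal) still lets me conclude a $\kappa$-bound on the output of $G$. This requires either an implicit continuity/robustness assumption on $f_G$ with respect to perturbations of its input argument, or — more in the spirit of the paper's Boolean setting — the observation that once the inner signal is within $\kappa$ of a clean Boolean level it is interpreted as the same Boolean value, so $G$'s behaviour is unchanged and its own $\kappa$-guarantee applies verbatim. I would state this reading explicitly (mirroring the treatment of constant inputs in Lemma~\ref{lem-1}) and then the two-stage delay argument goes through. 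A secondary point to handle carefully is that $G$ may have several inputs, some of which are themselves outputs of sub-circuits satisfying sequential mapping; the same $\tau_S + \tau_G$ construction works by taking $\tau_S$ to be the maximum of the inner delays, but I would flag this so the induction extending Lemma~\ref{lem-2} to arbitrary cascaded networks (needed for the full electronic-circuit result) is transparent.
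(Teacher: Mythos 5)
Your proposal is correct in outline but takes a genuinely different route from the paper's. The paper's proof is a one-step application of Definition~1 to the outer circuit $G$: because the definition quantifies over \emph{any} input signals paired with their sequential mapping functions, the actual (possibly not-yet-settled) signal $S$ together with $f_S$ is itself an admissible input to $G$, so $G$'s own guarantee immediately yields $\forall t\,((|S'(\dots,t)-f_{S'}(\dots,t)|>\kappa)\rightarrow(|S'(\dots,t+\tau_G)-f_{S'}(\dots,t+\tau_G)|<\kappa))$ with $\tau_{S'}=\tau_G$ and $f_{S'}=f_G(f_S,\cdot)$; notably, the hypothesis that $S$ satisfies sequential mapping is not actually exercised beyond supplying $f_S$. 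Your two-stage argument with $\tau_{S'}=\tau_S+\tau_G$ is the more physically faithful picture (inner circuit settles, then outer circuit settles), and it honestly surfaces the robustness issue --- that $G$ is fed an input only $\kappa$-close to $f_S$ --- which the paper's definition quietly absorbs through its universal quantification over input signals. The cost is that your argument needs the extra assumption you flag (Boolean thresholding or continuity of $f_G$) to close, and also needs $S$ to \emph{remain} within $\kappa$ on $[t+\tau_S,\,t+\tau_S+\tau_G]$, which Definition~1 alone does not guarantee (it only promises recovery within $\tau_S$ of any excursion, not persistence afterward). So your route is more informative but strictly needs more hypotheses than the paper's; the paper's route is formally airtight given its definition, precisely because the definition is doing the heavy lifting.
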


\begin{proof}
 Let the input signals of circuit $S'$ be $X_1,\dots,X_n$, the output signals of circuits $S$ and $G$ at time $t$ are $S(X_1,\dots,X_n,t)$ and $G(S,X_1,\dots,X_n,t)$ respectively.
    Let the sequential mapping functions of circuits $S$ and $G$ be $f_S$ and $f_G$.
    From Definition 1, there exists $\tau_G > 0$ such that $\forall t((|G(S,X_1,\dots,X_n,t) - f_G(f_S,f_{X_1},\dots,f_{X_n},t) | > \kappa ) \to (|G(S,X_1,\dots,X_n,t+\tau_G) - f_G(f_S,f_{X_1},\dots,f_{X_n},t+\tau_G) | < \kappa ))$ holds. Since $S^\prime$ is the cascade of $S$ and $G$, we have: $S^\prime(X_1,\dots,X_n,t)=G(S,X_1,\dots,X_n,t)$ and $f_{S^\prime}(f_{X_1},\dots,f_{X_n},t) = f_G(f_S,f_{X_1},\dots,f_{X_n},t)$. Thus, there exists $\tau_{S^\prime} = \tau_G$ such that the output of the cascaded circuit $S^\prime$ satisfies: $\forall t((|S^\prime(X_1,\dots,X_n,t) - f_{S^\prime}(f_{X_1},\dots,f_{X_n},t) | > \kappa ) \to (|S^\prime(X_1,\dots,X_n,t+\tau_{S^\prime}) - f_{S^\prime}(f_{X_1},\dots,f_{X_n},t+\tau_{S^\prime}) | < \kappa )) $. According to Definition 1, the cascaded circuit $S^\prime=G(S)$ satisfies sequential mapping.
\end{proof}

From Lemma \ref{lem-1} and Lemma \ref{lem-2}, we have proved that both the electronic NOT-AND gate and its cascaded arrangements have sequential mapping. We will further demonstrate that other electronic logic circuits also satisfy this property based on these lemmas in the following sections.

\begin{theorem}
    Electronic logic circuits satisfies sequential mapping.
    \label{theorem-2}
\end{theorem}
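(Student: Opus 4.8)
The plan is to reduce Theorem \ref{theorem-2} to the two lemmas already proved, exploiting the fact that the electronic NOT-AND gate is functionally complete. First I would recall that every Boolean function can be expressed as a finite composition (cascade) of NOT-AND gates: given an arbitrary electronic logic circuit $S$ realizing some Boolean function, there is an equivalent circuit built purely from NAND gates wired together, obtained for instance via the standard NAND-only synthesis of the disjunctive normal form. Thus $S$, up to logical equivalence, is of the form $G_1(G_2(\cdots(G_k)\cdots))$ where each $G_i$ is a NOT-AND gate (or a fan-in/fan-out wiring of such gates), and the sequential mapping function $f_S$ is the corresponding composition $f_{G_1}\circ\cdots\circ f_{G_k}$ of the NAND sequential mapping functions from Lemma \ref{lem-1}.

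The core argument is then an induction on the number $k$ of NOT-AND gates (equivalently, on the depth of the cascade). The base case $k=1$ is exactly Lemma \ref{lem-1}: a single electronic NOT-AND gate satisfies sequential mapping. For the inductive step, assume every cascade of fewer than $k$ NOT-AND gates satisfies sequential mapping; write the depth-$k$ circuit as $S = G(S'')$ where $G$ is the final NOT-AND gate and $S''$ is the sub-circuit feeding it, which is a cascade of at most $k-1$ NOT-AND gates and hence satisfies sequential mapping by the induction hypothesis. Since $G$ itself satisfies sequential mapping by Lemma \ref{lem-1}, Lemma \ref{lem-2} (closure under cascading) yields that $S = G(S'')$ satisfies sequential mapping, with delay $\tau_S$ inherited from the composition. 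This closes the induction, so every electronic logic circuit — being logically equivalent to some finite NAND cascade — satisfies sequential mapping.

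One technical point I would address explicitly is that a general circuit has multiple inputs with fan-out and possibly several gates at the same level, not a single linear chain; I would note that Lemma \ref{lem-2} as stated already accommodates multiple input signals $X_1,\dots,X_n$, and that a level with several parallel NAND gates can be folded into the induction by treating the combined level as a mapping whose sequential mapping property follows by applying Lemma \ref{lem-1} coordinatewise (each output component returns to within $\kappa$ within its own delay, and one takes $\tau$ to be the maximum of the finitely many delays). The main obstacle is therefore not any hard estimate but the bookkeeping: making precise the claim that "electronic logic circuit" means something logically equivalent to a finite acyclic interconnection of NAND gates, and checking that the finite maximum of the constituent delays $\tau_S$ is a legitimate uniform choice that works for all inputs $X_1,\dots,X_n$ simultaneously — which it is, since each $\tau_{G_i}$ in Lemma \ref{lem-1} was chosen as the (input-independent) propagation delay $t_d$.
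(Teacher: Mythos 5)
Your proposal is correct and follows the same route as the paper: reduce an arbitrary circuit to a NAND-only cascade via functional completeness, then combine Lemma \ref{lem-1} (single gate) with Lemma \ref{lem-2} (closure under cascading). The paper states this in three sentences without the explicit induction or the parallel-gate bookkeeping; your version just makes those steps explicit.
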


\begin{proof}
    From Lemma \ref{lem-1}, we know that a single electronic NOT-AND gate satisfies sequential mapping. From Lemma \ref{lem-2}, we can conclude that the cascading of multiple electronic NOT-AND gates also meets sequential mapping. Moreover, any Boolean logic function can be equivalently realized using only electronic NOT-AND gates. Therefore, electronic logic circuits with Boolean logic function can be equivalently constructed using cascaded electronic NOT-AND gates and satisfy sequential mapping.
\end{proof}

\begin{corollary}
    Electronic circuits are capable of realizing sequential logic circuits.
    \label{lemma-3}
\end{corollary}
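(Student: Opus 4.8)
The plan is to obtain the corollary by chaining the two theorems and the lemmas already in hand, with essentially no new machinery. First I would recall that an electronic NOT-AND (NAND) gate is physically constructible from the electronic components under discussion — this is exactly the device whose truth table appears in Table~\ref{tab1} and whose input--output behaviour was analysed in Lemma~\ref{lem-1}. Hence an electronic circuit does possess the ``NOT-AND gate computing capability'' hypothesised in Theorem~\ref{theorem-1}. Second, Theorem~\ref{theorem-2} already establishes that electronic logic circuits satisfy sequential mapping in the sense of Definition~1. With both hypotheses of Theorem~\ref{theorem-1} verified for the electronic substrate, that theorem applies verbatim and yields that electronic circuits realise sequential logic circuits, which is the statement of the corollary.

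In laying this out I would be explicit that the relevant witness circuit is the cross-coupled pair of NAND gates (the RS flip-flop) invoked in the proof of Theorem~\ref{theorem-1}: I would note that it is built from finitely many electronic NAND gates, that each such gate has sequential mapping by Lemma~\ref{lem-1}, and then appeal to Theorem~\ref{theorem-1} to conclude that this particular electronic circuit preserves the $t$-th Boolean signal from the $t$-th inputs and the $(t-1)$-th signal, i.e.\ behaves as a sequential logic circuit. Combined with Theorem~\ref{theorem-2}'s guarantee that every electronic logic circuit (not merely this flip-flop) inherits sequential mapping through the cascading argument of Lemma~\ref{lem-2}, the corollary follows for electronic circuits in general.

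The step that deserves the most care — and the one I would flag as the main subtlety rather than a genuine obstacle — is that the RS flip-flop is not a pure feed-forward cascade: its two NAND gates are connected in a feedback loop, whereas Lemma~\ref{lem-2} is phrased for a composition $S' = G(S)$. The clean way to handle this is to lean on Theorem~\ref{theorem-1} directly, since that theorem was stated precisely to cover the feedback (state-holding) situation: it takes as input any circuit that \emph{already} satisfies sequential mapping and has NOT-AND capability, and produces the sequential-logic conclusion, absorbing the loop into the $\tau_S$/$\kappa$ analysis. So in the write-up I would avoid re-deriving the flip-flop's stability and instead cite Theorem~\ref{theorem-1} and Theorem~\ref{theorem-2} as black boxes, keeping the corollary's proof to a couple of sentences.
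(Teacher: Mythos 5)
Your proposal is correct and follows essentially the same route as the paper: invoke Theorem~\ref{theorem-2} for sequential mapping of electronic circuits, note that electronic hardware realizes NOT-AND computing capability, and apply Theorem~\ref{theorem-1} to conclude. The extra discussion of the RS flip-flop feedback loop is sound but already absorbed into Theorem~\ref{theorem-1} as you note, so the paper's two-line argument suffices.
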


\begin{proof}
    From Theorem \ref{theorem-2}, the inputs and outputs of electronic circuits satisfy the sequential mapping. In addition, electronic circuits can realize the computational capability of OR gates and AND gates through semiconductor devices. It is proved in Theorem \ref{theorem-1} that circuits which satisfy sequential mapping while exhibiting the computational capability of NOT-AND gate effectively realize sequential logic circuits, thus electronic circuits have the ability to realize sequential logic circuits.
\end{proof}

Sequential logic is important in general-purpose computers. It provides general-purpose computers with the ability to store states, implement sequential control, and manage dynamic behaviors. For example, it enable fundamental operations in CPUs ——  instruction cycles, as well as data frame synchronization (e.g., UART, SPI) and error checking (e.g., CRC generators) in communication systems \cite{ashar1992sequential}. Without sequential logic circuits, electronic computers would degrade into simple calculators capable only of performing instantaneous operations, unable to handle any tasks requiring state memory or sequential coordination. They are the crucial bridge that transforms the digital world from "static" to "dynamic".

In this section, we proposed the concept of sequential mapping and demonstrated that sequential mapping is one of the sufficient conditions for sequential logic. In addition, it has been discussed that electronic computers satisfy this definition. Sequential logic circuits play an important role in general-purpose computers. The ability of electronic computers to implement sequential logic circuits makes them general-purpose computers. To construct a new type of general-purpose computer satisfying sequential mapping, we will present the biochemical computing mode in the following and prove its sequential mapping.

\section{Biochemical Computing Mode}
In this section, we will model, design, and analyze enzymatic reactions to create biochemical logic gates and demonstrate their sequential mapping. To be specific, we will propose logic gate models of enzymatic reactions and biochemical computing, including the construction and implementation of biochemical NOT, AND, and OR gates.

\subsection{Biochemical Gates Based on Enzymatic Reactions}
Enzymatic reactions refer to reactions that are accelerated by the catalytic action of enzymes \cite{koshland1958application}. Under the influence of enzymes, the conversion rate between substrates and products can be enhanced several and even millions of times, and enzymes are not consumed in the reaction process. Thus, we can utilize the non-consumable catalytic property of enzymes to realize biochemical logic gates with sequential mapping. The biochemical logic gates presented here are based on enzymatic reactions execute computational logic. In the enzymatic reaction process, reactants and products are referred to as substrates and products, respectively. Our proposed biochemical logic gates use small molecular compounds as substrates and products, with enzymes acting as the driving input signals. The enzymes as input signals are immobilized on solid supports to control the amount of enzyme and realize plug-and-play input signals. When the solid support containing the enzyme is immersed in a solution, i.e., the inserted enzyme, the input signal indicates “1”. Conversely, when the solid support is withdrawn from the solution, i.e., the extracted enzyme, the input signal is “0”. In enzyme-catalyzed reactions, the amount of substance remains constant and is not consumed, allowing it to be reused. The immobilization techniques facilitate the control of enzyme contact with the liquid reaction system. This approach addresses the issue of chemical signals being non-reusable during computation, which enables the construction of sustainable and reusable biochemical logic gates.

The Michaelis-Menten equation \cite{lopez2000generalized} describes the relationship between the initial velocity of an enzymatic reaction and substrate concentration, as seen in Equation \eqref{eq1}:
\begin{equation}
V_0 = V_{max}  \frac{[S]}{K_m+[S]} 
\label{eq1}
\end{equation}
where:
If A denotes a chemical substance, then $[A]$ denotes its relative concentration, i.e., the percentage of A with its value ranging from 0 to 1.
$[S]$ is the relative concentration of the substrate and $V_0$ is the initial velocity of the enzymatic reactions. 
$K_m$ is the Michaelis constant, representing the substrate concentration $[S]$ at which the enzyme catalysis reaches half of the maximum velocity $V_{max}$. When the substrate concentration is sufficiently high or approaches saturation, the reaction velocity gradually approaches the constant value $V_{max}$, which is defined by:
\begin{equation}
    V_{max}=k_{cat}\left[E\right]
    \label{eq2}
\end{equation}
where $k_{cat}$ is the catalytic constant of the enzyme, and $[E]$ is the relative concentration of the enzyme.

The Michaelis-Menten equation reveals the changes of substrate concentration as well as their influencing factors, which is crucial for modeling biochemical logic gates. When substrate is sufficient, the maximum reaction rate $V_{max}$ is influenced by both $[E]$ and $k_{cat}$. For a given enzyme, $k_{cat}$ is a constant, while enzyme concentration can be controlled based on the input-output characteristics of the logic gate. Thus, when the substrate concentration is sufficiently high or approaches saturation, we denote the maximum speed of enzyme $E$ as:
\begin{equation}
    {V_E=V}_{max}=k_{cat}\left[E\right]  
    \label{eq3}  
\end{equation}

The Boolean values of the input and output signals of the biochemical logic gates are defined by the threshold function $T(x)$:
\begin{eqnarray}
    T\left(x\right)=
    \left\{
    \begin{aligned}
        0,  & \quad x<\tau_0             \\
        1,  & \quad x> \tau_1
    \end{aligned}
    \right.
    \label{eqa1}
\end{eqnarray}
where $0<\tau_0<\tau_1<1$, and $x$ represents the concentration of the input or output substance. The constants $\tau_0$ and $\tau_1$ are the threshold values for logic 0 and 1. When the concentration is below $\tau_0$, the signal corresponds to logic 0; when the concentration is above $\tau_1$, it corresponds to logic 1. When the concentration is between $\tau_0$ and $\tau_1$, the signal is invalid, and the logic value cannot be identified. At this time, the biochemical computing component remains in the computation process until completion, at which point the output signal transitions to either logic 0 or logic 1.

The operation of biochemical logic gates consists of three processes: input, computation, and output. In the input process, the input of logic gates can be achieved by controlling input variables, i.e. enzyme concentration or activity. In the computation process, multiple enzymatic reactions interacting within the logic gate can perform the computational logic. In the output process, detection instruments can convert chemical concentration signals into observable signals.

\subsection{Biochemical NOT Gate}
The NOT gate is a fundamental unit of logic circuits. The biochemical computing mode will utilize two enzymes to catalyze the interconversion between a pair of substances to construct a biochemical NOT gate.

\subsubsection{Foundations of the Biochemical Reaction for the NOT Gate}

Suppose a biochemical reaction equation
\begin{equation}
S_1 \underset{P_1}{\overset{E_1}{\rightleftharpoons}} S_1^{\prime}
\label{eq5}
\end{equation}
where $S_1$ and $S_1^\prime$ denote small molecular compounds, and enzyme $E_1$ acts as the catalyst, facilitating the reaction to proceed more rapidly to the right. Additionally, there exists reaction as shown in Equation \eqref{eq5}, where $S_1^\prime$ can be converted back to $S_1$ under the catalysis of enzyme $P_1$.

According to the Michaelis-Menten equation \eqref{eq1}, the maximum reaction rate for positive reaction \eqref{eq5} under $E_1$ catalysis is $V_{E_1}$, while the maximum reaction rate for negative reaction \eqref{eq5} under $P_1$ catalysis is $V_{P_1}$. When the substrate concentration in the reaction is saturated and the catalytic activity of $P_1$ satisfies $V_{P_1}<V_{E_1}$, the catalytic capability of $P_1$ is less than that of $E_1$. This suggests that the rate of positive reaction is greater than that of negative reaction, while the concentrations of $S_1$ and $S_1^\prime$ are changing dynamically until both reactions in the system reach equilibrium. Thus, we can control the concentration changes of substrates $S_1$ and $S_1^\prime$  through the catalytic activity of $E_1$. When both $P_1$ and $E_1$ are present and can catalyze reactions, both positive reaction and negative reaction will occur. Since the intensity of positive reaction is greater than that of negative reaction, the consumption rate of $S_1$ is faster than its generation rate, while the generation rate of $S_1^\prime$ is faster than its consumption rate. Thus, relative to the initial concentrations of $S_1$ and $S_1^\prime$ , the concentration of $S_1^\prime$  is increased, while the concentration of $S_1$ is decreased. Conversely, when only $P_1$ is present, only negative reaction proceeds, resulting in a decrease in $S_1^\prime$  concentration and an increase in $S_1$ concentration. Thus, by controlling whether $E_1$ can catalyze the reaction and regulating its catalytic activity, we can regulate the concentrations of $S_1$ and $S_1^\prime$  within the reaction system. Based on this phenomenon, we can utilize the relationships between the changes in the concentrations of these enzymes and substrates to construct a biochemical NOT gate.

\subsubsection{Biochemical NOT Gate}
Based on the abovementioned model of the biochemical NOT gate, we can define its input, output, and mapping relationships.

\begin{definition}[Biochemical NOT gate]
    In reaction shown in Equation \eqref{eq5}, the mapping relationship between the concentration of $E_1 ([E_1 ])$ and the concentration of $S_1 ([S_1 ])$ constitutes a biochemical NOT gate. The input of the biochemical NOT gate is $[E_1 ]$, and the output is the $[S_1 ]$, satisfying $T([S_1])=\neg T([E_1])$. 
    \label{def-2}
\end{definition}

To achieve this input-output mapping relationship, the concentrations in the biochemical reaction model must meet the following Theorem \ref{theorem-3}.

\begin{theorem}
    When $E_1$ is not present, or when its concentration is higher than that of $P_1$ ($[P_1]>0$), the catalytic rates of two reactions satisfy $V_{P_1}<V_{E_1}$, and the input and output of the biochemical NOT gate satisfy $T([S_1])=\neg T([E_1])$.
\label{theorem-3}
\end{theorem}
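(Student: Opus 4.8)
The plan is to turn Definition~\ref{def-2} into a statement about the dynamics of reaction~\eqref{eq5}. I would combine the Michaelis--Menten law \eqref{eq1}--\eqref{eq3} for the two competing directions with conservation of total material in the interconverting pair, normalized so that $[S_1]+[S_1^\prime]=1$, so that the state of the gate is governed by the single scalar ODE
\[
\dot{[S_1]} \;=\; -\,V_{E_1}\frac{[S_1]}{K_m+[S_1]} \;+\; V_{P_1}\frac{1-[S_1]}{K_m^\prime+1-[S_1]},
\]
where $K_m,K_m^\prime$ are the Michaelis constants of the forward and reverse reactions. The hypothesis of the theorem is a disjunction, so I would split into the case $[E_1]=0$ and the case $[E_1]>[P_1]>0$, show in each that $[S_1]$ relaxes to an equilibrium lying on the correct side of the thresholds $\tau_0,\tau_1$ of \eqref{eqa1}, and then read off $T([S_1])=\neg T([E_1])$.

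\textbf{Case $[E_1]=0$.} Here $V_{E_1}=0$ by \eqref{eq3}, so the forward term vanishes and $\dot{[S_1]}=V_{P_1}\tfrac{1-[S_1]}{K_m^\prime+1-[S_1]}\ge 0$, strictly positive whenever $[S_1]<1$. Thus $[S_1]$ increases monotonically to its unique equilibrium $[S_1]=1$; once it rises above $\tau_1$ the output reads logic $1$, while $T([E_1])=T(0)=0$, so $T([S_1])=1=\neg T([E_1])$.

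\textbf{Case $[E_1]>[P_1]>0$.} First I would justify $V_{P_1}<V_{E_1}$: by \eqref{eq3} each maximal rate is proportional to the corresponding enzyme concentration, so under the standing assumption that the catalytic constants are comparable (more precisely, that $k_{cat}$ for $E_1$ and $P_1$ satisfy the inequality making $k_{cat}[E_1]>k_{cat}[P_1]$), the concentration ordering forces $V_{P_1}<V_{E_1}$. Then the right-hand side of the ODE is strictly decreasing in $[S_1]$ (the forward term increases, the backward term decreases), is positive at $[S_1]=0$ and negative at $[S_1]=1$; hence there is a unique, globally attracting equilibrium $s^\ast\in(0,1)$. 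Balancing the two terms at $s^\ast$ gives $\tfrac{s^\ast}{K_m+s^\ast}=\tfrac{V_{P_1}}{V_{E_1}}\cdot\tfrac{1-s^\ast}{K_m^\prime+1-s^\ast}<\tfrac{V_{P_1}}{V_{E_1}}$, so $s^\ast<\tfrac{V_{P_1}}{V_{E_1}-V_{P_1}}K_m$. Provided the gap $V_{E_1}-V_{P_1}$ is large enough relative to $K_m$ and $\tau_0$ — a constraint on catalytic activity that the construction is designed to meet — this places $s^\ast<\tau_0$, so the output reads logic $0$; since in this branch $[E_1]$ is held at a valid logic-$1$ level, $T([E_1])=1$ and $T([S_1])=0=\neg T([E_1])$.

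The qualitative part — monotonicity of the vector field, existence, uniqueness and global attractivity of the equilibrium, convergence — is routine. \textbf{The main obstacle is the quantitative threshold crossing in the second case:} $V_{P_1}<V_{E_1}$ alone only gives $s^\ast<\tfrac12$ (when $K_m=K_m^\prime$), not $s^\ast<\tau_0$, so I would have to make explicit the additional design inequality relating $V_{E_1}/V_{P_1}$, $K_m$ and $\tau_0$ — equivalently, invoke the convention following \eqref{eqa1} that the computation runs to completion until the output leaves the invalid band $(\tau_0,\tau_1)$ — and likewise state cleanly the convention fixing the logic level of the input enzyme so that ``$[E_1]$ higher than $[P_1]$'' genuinely corresponds to $T([E_1])=1$.
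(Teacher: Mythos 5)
Your proposal follows essentially the same route as the paper: Michaelis--Menten rates for the two directions, mass conservation normalizing $[S_1]+[S_1^\prime]=1$, and the equilibrium balance $V_3=V_4$ (the paper's Equation~\eqref{eq_NOT}). The difference is one of rigor rather than strategy. The paper does not analyze the dynamics or locate the equilibrium quantitatively; it simply evaluates the two limits $\lim_{[E_1]\to 0}[S_1]=1$ and $\lim_{[E_1]\to 1}[S_1]=0$ of the equilibrium equation and reads off the truth table. Your additional material --- monotonicity of the vector field, global attractivity of the unique equilibrium, and the explicit bound $s^\ast<\tfrac{V_{P_1}}{V_{E_1}-V_{P_1}}K_m$ --- goes beyond what the paper proves, and the ``main obstacle'' you flag is genuine: for a fixed $[P_1]>0$ and a finite logic-$1$ input level, the equilibrium is a strictly positive value depending on $[P_1]$ and the Michaelis constants, and $V_{P_1}<V_{E_1}$ alone does not force it below $\tau_0$. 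The paper sidesteps this by taking the idealized limit $[E_1]\to 1$ (and implicitly treating $[P_1]$ as small), so the extra design inequality you ask for is indeed missing from the published argument as well; stating it explicitly, or invoking the run-to-completion convention after \eqref{eqa1}, would make your version strictly stronger than the paper's. Your case split on the disjunctive hypothesis is also cleaner than the paper's single equilibrium equation, which quietly covers the $[E_1]=0$ case only through the limit.
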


\begin{proof}
    Let $V_3$ and $V_4$ represent the reaction rates of positive reaction and negative reaction in Equation \eqref{eq5} respectively. According to the Michaelis-Menten equation, the rates of positive reaction and negative reaction in Equation \eqref{eq5} can be expressed in relation to the substrate and enzyme concentrations as follows:
\begin{equation}
            V_3=\frac{V_{E_1}\left[S_1\right]}{K_{mE_1}+\left[S_1\right]}
            \label{eq6}
        \end{equation}
        \begin{equation}
            V_4=\frac{V_{P_1}\left[S_1^\prime\right]}{K_{mP_1}+\left[S_1^\prime\right]}
            \label{eq7}
\end{equation}
where $K_{mE_1}$ is the inherent catalytic constant of $E_1$, and $K_{mP_1}$ is the inherent constant of $P_1$. According to Equation \eqref{eq5}, let $V_{E_1}$ and $V_{P_1}$ represent the catalytic rates of $E_1$ and $P_1$ on these reactions. Based on the law of conservation of mass, the sum of the amounts of compounds $S_1$ and $S_1^\prime$ remains constant during the reaction. When the output signal of the logic gate stabilizes, the reaction reaches a chemical equilibrium state, satisfying $V_3=V_4$:  
\begin{equation}
            \frac{[E_1][S_1]}{K_{mE_1}+[S_1]}=\frac{[P_1](1-[S_1])}{K_{mP_1}+(1-[S_1])}
            \label{eq_NOT}
\end{equation}

To summarize, we can derive the relationship between the input and output of the biochemical NOT gate. As shown in Equation \eqref{eq_NOT}, we have: $\lim\limits_{[E_1]\to0}[S_1]=1$. When the relative concentration of the input substance $E_1$ corresponds to the input signal $T([E_1 ])$ being logical 0, the output signal $T([S_1 ])$ is logical 1. Conversely, according to Equation \eqref{eq_NOT}, we have $\lim\limits_{[E_1]\to 1}[S_1]=0$. When the relative concentration of input substance $E_1$ corresponds to the input signal $T([E_1 ])$ being logical 1, the output signal $T([S_1 ])$ is logical 0. Fig. \ref{fig_NOT} illustrates the relationship between the relative concentrations of input substance $E_1$ and output substance $S_1$. Table \uppercase\expandafter{\romannumeral2} shows the relationship between the input and output logical signal of the biochemical NOT gate. The relationship between input $T([E_1 ])$ and output $T([S_1 ])$, as shown in Table \uppercase\expandafter{\romannumeral2}, indicates that this logic gate satisfies the logical relationship of a NOT gate, i.e., $T([S_1 ])=\neg T([E_1 ])$.
\end{proof}
\begin{figure}[t]
        \centering
        \includegraphics[width=.4\linewidth]{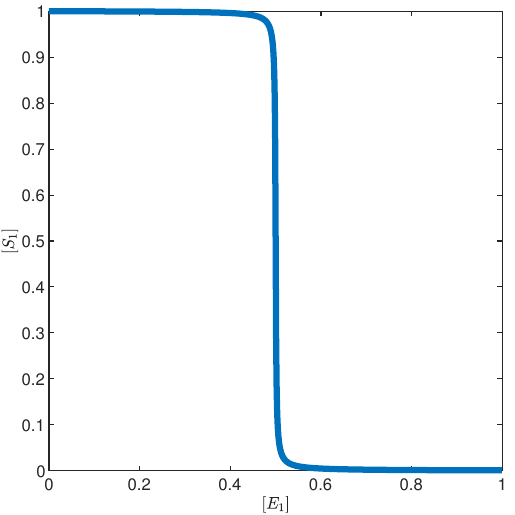}
        \caption{The input-output concentration relationship of biochemical NOT gate.}
        \label{fig_NOT}
    \end{figure}

\begin{table}[t]
\centering
\caption{Input-output mapping of the biochemical NOT gate}
\label{tab2}

\begin{tabular}{>{\centering\arraybackslash}p{3cm} >{\centering\arraybackslash}p{3cm} >{\centering\arraybackslash}p{3cm}}
\toprule
Input $T([E_1])$ & Output $T([S_1])$ & $\lnot T([E_1])$ \\
\midrule
        0 & 1 & 1 \\
        1 & 0 & 0 \\
\botrule
\end{tabular}
\end{table}

\subsection{Biochemical OR and AND Gates}
Unlike the biochemical NOT gate, which has a single input, the biochemical OR and AND gates have two inputs. Therefore, the biochemical computing model will utilize three enzymes to control substrate concentration changes, which can realize the simultaneous control of the logic gate’s output by both inputs. We can adjust substrate concentration by controlling enzyme concentration, thereby enabling the functionality of OR and AND gates.

\subsubsection{Foundations of the Biochemical Reaction for the OR Gate}
The biochemical OR gate operates through the catalytic action of three enzymes that facilitate the conversion of a pair of substances. Assume that there exists a small molecular compound $S_2$ satisfying the following reactions:
\begin{equation}
    S_2\stackrel{E_2}{\longrightarrow }S_2^\prime
    \label{eq8}
\end{equation}
\begin{equation}
    S_2\stackrel{E_3}{\longrightarrow }S_2^\prime
    \label{eq9}
\end{equation}
\begin{equation}
    S_2^\prime\stackrel{P_2}{\longrightarrow }S_2
    \label{eq10}
\end{equation}

The above reactions indicate that the conversion of $S_2$ to $S_2^\prime$ can be catalyzed by $E_2$ or $E_3$. According to the Michaelis-Menten equation, at sufficiently high substrate concentration, the maximum rate under the catalysis of $E_2$ for reaction \eqref{eq8} is $V_{E_2}$, the maximum rate under the catalysis of $E_3$ for reaction \eqref{eq9} is $V_{E_3}$, and the maximum rate under the catalysis of $P_2$ for reaction \eqref{eq10} is $V_{P_2}$.
\subsubsection{Biochemical OR Gate Computing Model}
Based on the biochemical reaction model of the biochemical OR gate, we can define its input, output, and mapping relationships as well.

\begin{definition}[Biochemical OR gate]
    In Equations \eqref{eq8}, \eqref{eq9}, and \eqref{eq10}, the concentrations of $E_2$, $E_3$, and $S_2^\prime$ establish the mapping of the biochemical OR gate. The inputs of the biochemical OR gate are the concentrations of $E_2 ([E_2 ])$ and $E_3 ([E_3 ])$. The output is the concentration of $S_2^\prime$ denoted as $[S_2^\prime]$ , satisfying the relationship: $T([S_2^\prime])=T([E_2])\lor T([E_3])$.
\end{definition}

To fulfill the input-output mapping relationship of the biochemical OR gate, the catalytic activities of enzymes in its biochemical reaction model must also meet the corresponding relationships.

\begin{theorem}
    When $E_2$ and $E_3$ are not present, or when their concentration is higher than that of $P_2$, the relationship of catalytic rates satisfies $V_{P_2}<V_{E_2}$ and $V_{P_2}<V_{E_3}$, and the input-output relationship of the biochemical logic OR gate satisfies: $T([S_2^\prime])=T([E_2])\lor T([E_3])$.
\label{theorem-4}
\end{theorem}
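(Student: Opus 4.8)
The plan is to mirror the proof of Theorem \ref{theorem-3}, adapting the two-enzyme reversible reaction to the three-enzyme system of Equations \eqref{eq8}--\eqref{eq10}. First I would write the Michaelis--Menten rate laws for the three reactions: let $V_5 = \frac{V_{E_2}[S_2]}{K_{mE_2}+[S_2]}$ and $V_6 = \frac{V_{E_3}[S_2]}{K_{mE_3}+[S_2]}$ be the rates of the two forward conversions $S_2 \to S_2^\prime$ catalyzed by $E_2$ and $E_3$, and let $V_7 = \frac{V_{P_2}[S_2^\prime]}{K_{mP_2}+[S_2^\prime]}$ be the rate of the reverse conversion $S_2^\prime \to S_2$ catalyzed by $P_2$. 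By conservation of mass the total amount of $S_2$ and $S_2^\prime$ is constant, so in relative concentrations $[S_2] = 1-[S_2^\prime]$. When the output stabilizes the net production of $S_2^\prime$ vanishes, giving the steady-state condition $V_5 + V_6 = V_7$, i.e.
\begin{equation}
\frac{V_{E_2}(1-[S_2^\prime])}{K_{mE_2}+(1-[S_2^\prime])} + \frac{V_{E_3}(1-[S_2^\prime])}{K_{mE_3}+(1-[S_2^\prime])} = \frac{V_{P_2}[S_2^\prime]}{K_{mP_2}+[S_2^\prime]}.
\label{eq_OR_equilibrium}
\end{equation}

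Next I would read the four logical cases off Equation \eqref{eq_OR_equilibrium}. When neither enzyme is inserted, $[E_2]=[E_3]=0$ forces $V_{E_2}=V_{E_3}=0$ by Equation \eqref{eq3}, so the left-hand side of \eqref{eq_OR_equilibrium} is $0$ and hence $[S_2^\prime]=0$; thus $T([E_2])=T([E_3])=0$ gives $T([S_2^\prime])=0$. When at least one enzyme, say $E_2$, is inserted, the hypothesis $V_{P_2}<V_{E_2}$ lets the corresponding forward term dominate: since the right-hand side of \eqref{eq_OR_equilibrium} is bounded by $V_{P_2}$ while the term $\frac{V_{E_2}(1-[S_2^\prime])}{K_{mE_2}+(1-[S_2^\prime])}$ grows with $[E_2]$, balance can only be maintained if $1-[S_2^\prime]\to 0$; taking $\lim_{[E_2]\to 1}$ exactly as in the NOT-gate argument yields $[S_2^\prime]\to 1$, so $T([S_2^\prime])=1$. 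The same holds with $E_2$ and $E_3$ exchanged, and a fortiori when both are inserted (the left-hand side is then strictly larger). Assembling the four cases into a truth table analogous to Table~\ref{tab2} reproduces the OR relation $T([S_2^\prime]) = T([E_2]) \lor T([E_3])$, and a concentration plot analogous to Fig.~\ref{fig_NOT} completes the argument.

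The step I expect to be the main obstacle is the two single-active-input cases $(1,0)$ and $(0,1)$: there one catalytic channel contributes nothing, so the output must be pushed above the threshold $\tau_1$ by a single enzyme acting against $P_2$ alone. Making this rigorous means showing that the root $[S_2^\prime]$ of \eqref{eq_OR_equilibrium} with, say, $V_{E_3}=0$ exceeds $\tau_1$, which is precisely where $V_{P_2}<V_{E_2}$ is essential. I would handle it by first noting that for fixed $K_m$ values the equilibrium $[S_2^\prime]$ is a continuous, monotonically increasing function of $V_{E_2}$ (the left side of \eqref{eq_OR_equilibrium} is increasing in $V_{E_2}$ and decreasing in $[S_2^\prime]$, the right side increasing in $[S_2^\prime]$), and then passing to the limit $[E_2]\to 1$ to match the limit-based reasoning already used for the NOT gate. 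The remaining two cases are then routine, since reinstating the second forward term in \eqref{eq_OR_equilibrium} only drives $[S_2^\prime]$ further toward $1$.
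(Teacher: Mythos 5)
Your proposal follows essentially the same route as the paper's proof: Michaelis--Menten rate laws for the three reactions, mass conservation $[S_2]=1-[S_2^\prime]$, the steady-state balance of the two forward rates against the reverse rate, and a case-by-case limit analysis of the four input combinations read off the resulting equilibrium equation. The monotonicity argument you add for the single-active-input cases is a mild tightening of the paper's bare limit claims but does not change the approach.
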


\begin{proof}
    Let the reaction rates of reactions in Equations \eqref{eq8}, \eqref{eq9}, and \eqref{eq10} be denoted as $V_7$, $V_8$, and $V_9$, respectively. According to the Michaelis-Menten equation, the relationship between the reaction rates of reactions \eqref{eq8}, \eqref{eq9}, and \eqref{eq10} and the concentrations of substrates and enzymes can be defined as:
\begin{equation}
            V_7=\frac{V_{E_2}\left[S_2\right]}{K_{mE_2}+\left[S_2\right]}
            \label{eq11}
\end{equation}
\begin{equation}
            V_8=\frac{V_{E_3}\left[S_2\right]}{K_{mE_3}+\left[S_2\right]}
            \label{eq12}
\end{equation}        
\begin{equation}
            V_9=\frac{V_{P_2}\left[S_2^\prime\right]}{K_{mP_2}+\left[S_2^\prime\right]}
            \label{eq13}
\end{equation}
When the reactions reach equilibrium, i.e., the concentrations of all substances being relatively stable, we have: $V_7+V_8=V_9$. This leads to the equation:
\begin{equation}
            \frac{[E_2](1-[S_2^\prime])}{K_{mE_2}+(1-[S_2^\prime])}
            +
            \frac{[E_3](1-[S_2^\prime])}{K_{mE_3}+(1-[S_2^\prime])}
            =
            \frac{[P_2][S_2^\prime]}{K_{mP_2}+[S_2^\prime]}
            \label{eq_AND_OR}
\end{equation}
When $V_{P_2}<V_{E_2}$ and $V_{P_2}<V_{E_3}$ are satisfied, we can know that the OR gate under all input combinations from equations \eqref{eq11}, \eqref{eq12}, \eqref{eq13} and \eqref{eq_AND_OR} satisfies the following relationships based on equation \eqref{eq_AND_OR}:
First, we have: $\lim\limits_{\left[E_2\right]\rightarrow0,\left[E_3\right]\rightarrow0\ }{\left[S_2^\prime\right]=0}$. When both inputs are 0, the output of the biochemical logic OR gate is 0. Second, we have $\lim\limits_{\left[E_2\right]\rightarrow0,\left[E_3\right]\rightarrow1\ }{\left[S_2^\prime\right]=1}$ and $\lim\limits_{\left[E_2\right]\rightarrow1,\left[E_3\right]\rightarrow0\ }{\left[S_2^\prime\right]=1}$. In two inputs $[E_2 ]$ and $[E_3 ]$, when one input is 0 and the other is 1, the output is 1. Third, we have $\lim\limits_{\left[E_2\right]\rightarrow1,\left[E_3\right]\rightarrow1\ }{\left[S_2^\prime\right]=1}$. When both inputs are 1, the output remains 1. Table \ref{tab3} illustrates the relationship between inputs and outputs of the biochemical logic OR gate. Thus, we can know that the relationship between the output $T([S_2^\prime])$ and the inputs $T([E_2])$ and $T([E_3])$ satisfies the OR gate: $T([S_2^\prime])=T([E_2])\land T([E_3])$.
\end{proof}

\begin{table}[t!]
\centering
\captionsetup{width=\textwidth}
\caption{The input-output mapping of biochemical logic OR gate}
\label{tab3}
\begin{tabular}{cccc}
\toprule
Input $T([E_2])$ & Input $T([E_3])$ & Output $T([S_1])$ & $T([E_2])\vee T([E3])$ \\
\midrule
        0 & 0 & 0 & 0\\
        0 & 1 & 1 & 1\\
        1 & 0 & 1 & 1\\
        1 & 1 & 1 & 1\\
\botrule
\end{tabular}
\end{table}

Hence, by utilizing the substances $S_2$, $S_2^\prime$, $E_2$, $E_3$ and $P_2$ from reactions \eqref{eq8}, \eqref{eq9} and \eqref{eq10}, and ensuring that the catalytic reaction rates $V_{E_2}$ and $V_{E_3}$ exceed $V_{P_2}$, we can construct a biochemical OR gate with $E_2$ and $E_3$ as input concentrations and $S_2^\prime$ as the output concentration. Furthermore, by modifying the relationships between $V_{E_2}$, $V_{E_3}$, and $V_{P_2}$, it is possible to achieve the input-output mapping for a biochemical logic AND gate.

\subsection{Biochemical AND Gate}
Both the biochemical AND and OR gates operate through the catalytic action of three enzymes that facilitate the interconversion of a pair of substances. At this time, the catalytic rate of each enzyme should satisfy: 
    \begin{equation}
        \begin{matrix}V_{P_2}>V_{E_2}
            \\V_{P_2}>V_{E_3}
            \\V_{P_2}<V_{E_2}+V_{E_3}
           
           \end{matrix}
           \label{eq14}
    \end{equation}
Observe that the total reaction rate of reactions \eqref{eq8} and \eqref{eq9} can only exceed that of reaction \eqref{eq10} when both $E_2$ and $E_3$ are fully active.

Since biochemical AND and OR gates share the same biochemical reaction model, we can define the input, output, and mapping relationships of the AND gate.

\begin{definition}[Biochemical AND gate]
    In reactions shown in Equations \eqref{eq8}, \eqref{eq9}, and \eqref{eq10}, when Equation \eqref{eq14} is satisfied, the concentrations of $E_2 ([E_2 ])$ and $E_3 ([E_3 ])$, and substrate  $S_2^\prime([S_2^\prime])$  follow the mapping relationship of a biochemical AND gate. The inputs for this gate are $[E_2 ]$ and $[E_3]$, while the output is $[S_2^\prime]$, satisfying: $T([S_2^\prime])=T([E_2])\land T([E_3])$.
    \label{def-4}
\end{definition}

Similar to the biochemical OR gate, in order to satisfy the input-output mapping of the AND gate, the concentrations of substances in the biochemical reaction model must fulfill the following conditions.

\begin{theorem}
    When the catalytic rates $V_{E_2}$, $V_{E_3}$, and $V_{P_2}$ satisfy Equation \eqref{eq14}, the input-output relationship of the biochemical AND gate satisfies: $T([S_2^\prime])=T([E_2])\land T([E_3])$.
    \label{theorem-5}
\end{theorem}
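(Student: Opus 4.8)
The plan is to mirror the proof of Theorem~\ref{theorem-4}, since the biochemical AND gate shares exactly the reaction model \eqref{eq8}--\eqref{eq10} with the OR gate and differs only in the ordering of the maximal catalytic rates imposed by \eqref{eq14}. First I would observe that the equilibrium condition $V_7+V_8=V_9$ (all species stable) again reduces to Equation~\eqref{eq_AND_OR}, and that by Equation~\eqref{eq3} the maximal forward rates scale with enzyme concentration, $V_{E_2}\propto[E_2]$ and $V_{E_3}\propto[E_3]$, while $V_{P_2}$ is a fixed constant determined by $P_2$. The heart of the argument is then a case analysis over the four Boolean input combinations, reading off the value of $[S_2']$ at equilibrium from \eqref{eq_AND_OR}.

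When $T([E_2])=T([E_3])=0$, both forward terms in \eqref{eq_AND_OR} vanish in the limit $[E_2],[E_3]\to 0$, so the only consistent solution is $[S_2']\to 0$, giving output $0$. When exactly one input is $1$, say $[E_2]\to 1$ and $[E_3]\to 0$ (and symmetrically for the other), one forward term vanishes while the surviving one has maximal rate $V_{E_2}$, which by the first inequality of \eqref{eq14} is strictly smaller than $V_{P_2}$; since the backward capacity strictly exceeds the available forward capacity, the equilibrium of \eqref{eq_AND_OR} is pushed down to $[S_2']\to 0$, again output $0$. Finally, when $T([E_2])=T([E_3])=1$, both forward terms are active and their combined maximal rate $V_{E_2}+V_{E_3}$ exceeds $V_{P_2}$ by the third inequality of \eqref{eq14}; now the forward capacity dominates and the equilibrium rises to $[S_2']\to 1$, output $1$. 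Assembling the four rows reproduces the AND truth table, so $T([S_2'])=T([E_2])\land T([E_3])$, which I would record in a table analogous to Table~\ref{tab3}.

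The step I expect to be the main obstacle is the pair of mixed cases $(1,0)$ and $(0,1)$: unlike the OR gate, the lone active forward reaction does not switch off, so one cannot merely send a rate to zero. Instead I would argue that the left-hand side of \eqref{eq_AND_OR} is strictly decreasing in $[S_2']$ --- each summand has the form $u/(K+u)$ with $u=1-[S_2']$, which is increasing in $u$ hence decreasing in $[S_2']$ --- whereas the right-hand side $V_{P_2}[S_2']/(K_{mP_2}+[S_2'])$ is strictly increasing, so the equilibrium is unique; then, using $V_{E_2}<V_{P_2}$ together with the saturation regime underlying the Michaelis--Menten form, one shows this unique crossing point lies below the logic-$0$ threshold $\tau_0$ by comparing \eqref{eq_AND_OR} term-by-term against the balance evaluated at $[S_2']=\tau_0$ and invoking monotonicity. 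This monotone-comparison argument is essentially what the OR-gate proof uses implicitly for its $\tau_1$ threshold, and I would make it explicit here. A secondary point worth stating carefully is that the third inequality in \eqref{eq14} ($V_{P_2}<V_{E_2}+V_{E_3}$) is genuinely needed only in the $(1,1)$ case, whereas the first two ($V_{P_2}>V_{E_2}$ and $V_{P_2}>V_{E_3}$) are precisely what flip the mixed-case behavior from OR-like to AND-like.
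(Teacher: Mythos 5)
Your proposal follows essentially the same route as the paper: reuse the rate equations \eqref{eq11}--\eqref{eq13} and the equilibrium balance \eqref{eq_AND_OR} from the OR-gate analysis, then run a case analysis over the four Boolean input combinations and read off $[S_2^\prime]$ at equilibrium, exactly as in the proof of Theorem~\ref{theorem-4}. Where you differ is that you single out the mixed cases $(1,0)$ and $(0,1)$ as the crux and argue them via uniqueness of the equilibrium (left side of \eqref{eq_AND_OR} decreasing in $[S_2^\prime]$, right side increasing) together with $V_{E_2}<V_{P_2}$, whereas the paper simply asserts $\lim_{[E_2]\to1,[E_3]\to0}[S_2^\prime]=0$ with no justification. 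Your instinct is correct --- in the mixed case the surviving forward term does not vanish, so the ``send a rate to zero'' argument that works for the OR gate genuinely breaks down, and your monotonicity observation does establish that the equilibrium is unique. However, the remaining step is not closed by $V_{E_2}<V_{P_2}$ alone: the unique crossing point satisfies $V_{E_2}(1-s)/(K_{mE_2}+1-s)=V_{P_2}s/(K_{mP_2}+s)$, and if, say, $K_{mE_2}=K_{mP_2}$ and $V_{E_2}$ is only slightly below $V_{P_2}$, the solution $s$ sits just below $1/2$, which need not lie below $\tau_0$. A quantitative hypothesis tying $V_{E_2}/V_{P_2}$, the Michaelis constants, and the threshold $\tau_0$ together is needed to finish the mixed case. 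This gap is present in the paper's own proof as well, which silently treats the mixed-case equilibrium as exactly $0$; your attempt is, if anything, more explicit about where the real difficulty lies, but as written it does not yet resolve it.
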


\begin{proof}
    Since the chemical equations for the AND gate are the same as those for the OR gate, the reaction rate equations can be expressed by Equations \eqref{eq11}, \eqref{eq12}, and \eqref{eq13}. However, given the inclusion of condition \eqref{eq14}, the input-output relationship for the biochemical AND gate can be derived as follows: First, given ${\lim\limits_{[E_2] \to 0,[E_3] \to 0\ }{[S_2^\prime]=0}}$, when both inputs are 0, the output of the biochemical logic AND gate is 0. Second, given $\lim\limits_{[E_2 ]\to 0,[E_3]\to 1\ }{[S_2^\prime]=0}$ and $\lim\limits_{[E_2]\rightarrow1,[E_3]\rightarrow0\ }{[S_2^\prime]=0}$, when only one input is 1 and the other is 0, the output is also 0. Third, given$\lim\limits_{\left[E_2\right]\rightarrow1,\left[E_3\right]\rightarrow1\ }{\left[S_2^\prime\right]=1}$, when both inputs are 1, the output is 1. These relationships can be summarized in Table \ref{tab4}. The relationship between the output $T([S_2^\prime])$ and the inputs $T([E_2 ])$ and $T([E_3 ])$ suggests that the biochemical AND gate satisfies the logical relationship of an AND gate: $T([S_2^\prime])=T([E_2])\land T([E_3])$.
\end{proof}

\begin{table}[t!]
\centering
\caption{The input-output mapping of biochemical AND gate}
\label{tab4}
\begin{tabular}{cccc}
\toprule
Input $T([E_2])$ & Input $T([E_3])$ & Output $T([S_2^\prime])$ & $T([E_2])\wedge T([E3])$ \\
\midrule
        0 & 0 & 0 & 0\\
        0 & 1 & 0 & 0\\
        1 & 0 & 0 & 0\\
        1 & 1 & 1 & 1\\
\botrule
\end{tabular}
\end{table}
Therefore, by utilizing the substances $S_2$, $S_2^\prime$, $E_2$, $E_3$ and $P_2$ from reactions shown in Equations \eqref{eq8}, \eqref{eq9}, and \eqref{eq10},~we can construct a biochemical AND gate with the concentrations of $E_2$ and $E_3$ as inputs and the concentration of $S_2^\prime$ as the output. In this configuration, $V_{E_2}$ and $V_{E_3}$ are less than $V_{P_2}$, while $V_{P_2}$ is less than the sum of $V_{E_2}$ and $V_{E_3}$.

\section{Sequential Mapping and Logic of Biochemical Computing Mode}
In the previous section, we demonstrated the reusability and static input-output characteristics of biochemical logic gates. We will further explore the dynamic input-output properties of biochemical logic gates by introducing the temporal dimension to prove the sequential mapping of biochemical computing mode. We will first prove that both the biochemical NOT and OR gate satisfy this property.

\begin{theorem}
    The biochemical NOT gate satisfies sequential mapping.
    \label{theorem-6}
\end{theorem}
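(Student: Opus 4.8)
The plan is to treat the output concentration $[S_1](t)$ as a scalar dynamical system, show it relaxes monotonically to a unique equilibrium whose threshold reading is the negation of the input, and bound the relaxation time by a constant $\tau_S$ depending only on the reaction parameters and the prescribed error $\kappa$. As in the electronic case (Lemma~\ref{lem-1}), it suffices to consider input intervals of length greater than $\tau_S$, on which the input $[E_1]$ is constant. On such an interval, conservation of mass gives $[S_1^\prime]=1-[S_1]$, and combining this with the Michaelis--Menten rates $V_3,V_4$ of Equations~\eqref{eq6}--\eqref{eq7} yields the governing equation
\begin{equation}
\frac{d[S_1]}{dt}\;=\;V_4-V_3\;=\;\frac{V_{P_1}\,(1-[S_1])}{K_{mP_1}+(1-[S_1])}\;-\;\frac{V_{E_1}\,[S_1]}{K_{mE_1}+[S_1]}.
\label{eq-not-relax}
\end{equation}
Its right-hand side $g([S_1])$ is strictly decreasing on $[0,1]$ (the first term decreases in $[S_1]$, the second increases), with $g(0)>0$ and $g(1)\le 0$; hence \eqref{eq-not-relax} has a unique equilibrium $[S_1]^\ast\in[0,1]$ --- precisely the steady state of Equation~\eqref{eq_NOT} --- and every trajectory in $[0,1]$ converges to it monotonically.

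Next I would identify this steady state with the intended functionality. Take the sequential mapping function of the gate to be $f_S(f_{E_1},t)=\neg T\big([E_1](t)\big)$, the ideal NOT of the idealized input. By Theorem~\ref{theorem-3} and the limits $\lim_{[E_1]\to 0}[S_1]^\ast=1$ and $\lim_{[E_1]\to 1}[S_1]^\ast=0$ established there, together with the monotonicity of $[S_1]^\ast$ in $[E_1]$, the reaction parameters and the thresholds $\tau_0,\tau_1$ of \eqref{eqa1} can be chosen so that $[S_1]^\ast>1-\kappa$ whenever $[E_1]$ holds at logic $0$ and $[S_1]^\ast<\kappa$ whenever it holds at logic $1$; thus $|[S_1]^\ast-f_S(f_{E_1},t)|<\kappa$ throughout any interval on which $[E_1]$ is a valid Boolean input. (Alternatively one may simply take $f_S$ to be the steady-state map itself, which by Theorem~\ref{theorem-3} realizes the NOT relation at the threshold level, turning this step into an equality.)

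The quantitative core is a uniform bound on the relaxation time. On the compact set $\{[S_1]\in[0,1]:|[S_1]-[S_1]^\ast|\ge\kappa\}$ the continuous function $g$ is nonzero and, by monotonicity, its sign always points toward $[S_1]^\ast$; let $\delta>0$ be the infimum of $|g([S_1])|$ over this set and over all admissible input values --- a positive number, since $[S_1]^\ast$ and $g$ depend continuously on $[E_1]$ on the compact ranges $[0,\tau_0]$ and $[\tau_1,1]$. Then from any initial concentration the output enters the $\kappa$-ball about $[S_1]^\ast$ within time $1/\delta$, so I would set $\tau_S:=1/\delta$. Assembling: given $\kappa$ and an admissible input signal $[E_1]$ with idealization $f_{E_1}$, fix any $t$ with $|[S_1](t)-f_S(f_{E_1},t)|>\kappa$; by the preceding, within $\tau_S$ the output comes within $\kappa$ of the current equilibrium, which is itself within $\kappa$ of $f_S$, so $|[S_1](t+\tau_S)-f_S(f_{E_1},t+\tau_S)|<\kappa$ --- exactly the implication required by Definition~1. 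Hence the biochemical NOT gate satisfies sequential mapping.

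The step I expect to be the main obstacle is this uniform relaxation-time bound: one must verify carefully that \eqref{eq-not-relax} defines a monotone scalar flow with a single globally attracting fixed point, and that the approach speed outside the $\kappa$-ball is bounded below by a positive constant independent of the particular input signal and initial state --- this is where the Michaelis--Menten kinetics, mass conservation, the condition $V_{P_1}<V_{E_1}$ of Theorem~\ref{theorem-3}, and the threshold function \eqref{eqa1} all have to be reconciled. A secondary subtlety is the short transient right after the input switches value: the delay $\tau_S$ must be large enough to absorb the worst case, which is precisely why one restricts attention to inputs held constant for longer than $\tau_S$.
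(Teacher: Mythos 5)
Your proposal is correct and reaches the same conclusion as the paper, but the quantitative core is handled by a genuinely different argument. Both you and the paper reduce the gate to the scalar ODE $d[S_1]/dt = V_4 - V_3$ with Michaelis--Menten terms and then bound the time to return within $\kappa$ of the target. The paper, however, does this by taking the two worst-case initial conditions ($S(t_0)=0$ with $f(t_0)=1$, and $S(t_0)=1$ with $f(t_0)=0$), sandwiching the trajectory between explicit sub/supersolutions $h(t)$ and $l(t)$ of simpler linear ODEs, solving those in closed form, and extracting explicit delays $t_+=-\frac{K_m+1}{[P_1]}\ln\kappa$ and $t_-$, with $\tau_S=\max(t_+,t_-)$. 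You instead establish that the right-hand side $g$ is strictly decreasing with $g(0)>0$, $g(1)\le 0$, hence a unique globally attracting equilibrium, and obtain $\tau_S=1/\delta$ from a uniform positive lower bound $\delta$ on $|g|$ outside the $\kappa$-ball via compactness and continuity in $[E_1]$. Your route buys robustness and rigor --- it does not depend on the correctness of the explicit comparison solutions (the paper's $l(t)$ and some of its intermediate inequalities are in fact questionable) and it makes explicit a point the paper elides, namely that the true equilibrium of Equation \eqref{eq_NOT} equals $\neg[E_1]$ only in the limits $[E_1]\to 0,1$, so parameters and thresholds must be chosen to put it within $\kappa$ of the Boolean value. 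The paper's route buys explicit, computable delay formulas in terms of $[P_1]$, $K_m$ and $\kappa$. One small repair your version needs: as written, the final step stacks ``output within $\kappa$ of equilibrium'' on ``equilibrium within $\kappa$ of $f_S$'' and concludes a deviation $<\kappa$ rather than $<2\kappa$; either run both estimates with $\kappa/2$, or adopt your own parenthetical alternative of taking $f_S$ to be the steady-state map itself, which makes the second term vanish.
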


\begin{proof}
    According to Theorem \ref{theorem-3}, the biochemical NOT gate comprises reactions \eqref{eq5}, with inputs and outputs represented by the concentrations of $E_1$ and $S_1$, respectively. Let the function of input $I$ of the biochemical NOT gate changing with the reaction time be $I(t)=[E_1 ]$, and the function of the output changing with the reaction time be $S(t)=[S_1 ]$. The sequential mapping function of the biochemical NOT gate is defined as the NOT relationship, which can be expressed as $f(t)=1-[E_1 ]$. It follows from Theorem \ref{theorem-3} that when the chemical reaction reaches equilibrium, the relationship between the two concentrations satisfies $[S_1]=\neg [E_1]$.
Let $0<\kappa<1$ be the output error of the biochemical NOT gate and let $t_0>0$ be any operating time of the gate. If the concentrations of substances meet the equilibrium condition at $t_0$, then $|S(t_0)-f_S(t_0)|\leqslant  \kappa$. Conversely, if the equilibrium condition is not met, then $|S(t_0)-f_S(t_0)|>\kappa$. This reaction will proceed toward equilibrium, causing the output $S$ to change over time. This relationship is defined by Equations \eqref{eq6} and \eqref{eq7}:
        \begin{equation}
            \frac{dS}{dt}=\frac{d[S_1]}{dt} = V_4-V_3 =
            \frac{[P_1](1-[S_1])}{K_m+(1-[S_1])}
            -
            \frac{[E_1][S_1]}{K_m+[S_1]}     
            \label{eq_dSdt}
        \end{equation}

The greater the difference between the substance concentration and the equilibrium state (i.e., $|S(t) - f(t)|$), the longer it takes to reach equilibrium. When $|S(t) - f(t)|= 1$, the reaction takes the longest time to reach equilibrium. Let the maximum time required to reach equilibrium be $t_{max}$ . Let the maximum reaction time be $t_+$ when the initial state is $S(t_0 )=0$ and $f(t_0 )=1$. Let the maximum reaction time be $t_-$ when the initial state is $S(t_0 )=1$ and $f(t_0 )=0$. Thus, we have: $t_{max}=\max(t_+,t_-)$. Where, $t_+$ and $t_-$ represent the longest times required for the reaction to reach equilibrium in both directions, while $t_{max}$  denotes the longest time required to reach equilibrium. When the initial state is $S(t_0 )=0$ and $f(t_0 )=1$, from Equation \eqref{eq_dSdt}, we have:
\begin{equation}
            \frac{dS}{dt}=\frac{d[S_1]}{dt} = V_4-V_3 =
            \frac{[P_1](1-S)}{K_m+(1-S)}
            >
            \frac{[P_1](1-S)}{K_m+1}
            >0
            \label{eq_dSdt_2}
        \end{equation}
        
Define the function $h(t): \mathbb{R} \to \mathbb{R} $ such that:
\begin{equation}
            \left\{\begin{matrix}
                \frac{dh}{dt}=\frac{[P_1](1-h)}{K_m+1}
                \\
                h(t_0)=0
            \end{matrix}
            \right.
            \label{eq_dhdt}
        \end{equation}
Since $h(t_0)=S(t_0)=0$ and 
$$\frac{dh}{dt}=\frac{[P_1](1-h)}{K_m+1}<\frac{[P_1](1-h)}{K_m+(1-h)}
        \leq \frac{[P_1](1-S)}{K_m+(1-S)}=\frac{dS}{dt},$$
        it follows that $ h(t)<S(t)$ for any $t>t_0$.
Solving Equation \eqref{eq_dhdt}, we get:
\begin{equation}
            h(t) = 1 - \exp(-\frac{[P_1](t-t_0)}{K_m+1})
        \end{equation}
It is observed that the maximum reaction time $t_+$ satisfies $h(t_0+t_+)=1-\kappa$, leading to: $t_+= - \frac{Km+1}{[P_1]}\ln(\kappa)>0$. Since the multiple input of biochemical gate exists time intervals, we have $f_S(t_0+t_+)=f_S(t_0)=1$. Hence: $|S(t_0+t_+)-f_S(t_0+t_+)|=1-S(t_0+t_+)<1-h(t_0+t_+)<\kappa$.

Moreover, when the initial state is $S(t_0 )=1$ and $f_S(t_0 )=0$, from Equation \eqref{eq_dSdt}, we have:
\begin{equation}
\begin{split}
    \frac{dS}{dt}&=\frac{d[S_1]}{dt} = V_4-V_3 \\
    &= \frac{[P_1](1-[S_1])}{K_m+(1-[S_1])}
    -
    \frac{[E_1][S_1]}{K_m+[S_1]}
    <
    [P_1]-\frac{[S_1]}{K_m+1}
    <0
\end{split}
\end{equation}

Define the function $l(t):\mathbb{R} \to \mathbb{R} $, such that:
\begin{equation}
            \left\{\begin{matrix}
                \frac{dl}{dt}=[P_1]-\frac{h}{K_m+1}
                \\
                h(t_0)=1
            \end{matrix}
            \right.
            \label{eq_dldt}
\end{equation}
Given that $l(t_0 )=S(t_0 )=1$, we have:
\begin{align*}
\frac{dl}{dt} &= [P_1] - \frac{l}{K_m + 1} \nonumber \\
&> \frac{[P_1](1 - l)}{K_m + (1 - l)} - \frac{[E_1]l}{K_m + l} \nonumber \\
&\geqslant \frac{[P_1](1 - [S_1])}{K_m + (1 - [S_1])} - \frac{[E_1][S_1]}{K_m + [S_1]} \nonumber \\
&= \frac{dS}{dt}
\end{align*}
Thus, $l(t)>S(t)$ holds for any $t>t_0$. Solving Equation \eqref{eq_dldt} yields:
\begin{equation}
    l(t) = (1+K_m)[P_1]+\exp(\frac{t_0-t}{1+K_m})
\end{equation}
In this case, the maximum reaction time $t_-$ satisfies $h(t_0+t_-)=\kappa$, leading to:
$$t_-= - \frac{\ln(\kappa-(1+K_m)[P_1])}{1+K_m}>0$$
Given that there are time intervals in multiple inputs of the logic gate, we have $f_S(t_0+t_- )=f_S(t_0 )=0$. Then, 
$$|S(t_0+t_-)-f_S(t_0+t_-)|=S(t_0+t_-)<l(t_0+t_-)<\kappa$$

In sum, the maximum time required for the biochemical NOT gate to reach equilibrium is $t_{max}=\max(t_+,t_-)$, which satisfies: if at $t_0$ the concentration of substances meets $|S(t_0)-f_S(t_0)|>\kappa$, then $$|S(t_0+t_{max})-f_S(t_0+t_{max})|<\kappa.$$ Thus, there exists $$\tau_S = t_{max} = \max(
            - \frac{Km+1}{[P_1]}\ln(\kappa)
            ,
            - \frac{\ln(\kappa-(1+K_m)[P_1])}{1+K_m}
            )$$
which lets the output $S$ of the biochemical NOT gate satisfy: $$\forall t  (  (|S(t)-f_S(t)|>\kappa) \rightarrow (|S(t+\tau_S)-f_S(t+\tau_S)|<\kappa)  ) $$
According to Definition 1, we can conclude that the biochemical NOT gate satisfies sequential mapping.
\end{proof}

Since all Boolean logic can be equivalently realized using only NOT and AND gates, we need to prove sequential mapping of the biochemical AND gate to further demonstrate sequential mapping of the biochemical computer.

\begin{theorem}
    The biochemical AND gate satisfies sequential mapping.
    \label{theorem-7}
\end{theorem}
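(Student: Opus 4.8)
The plan is to mirror the proof of Theorem \ref{theorem-6}. First I would recall from Theorem \ref{theorem-5} that, once reactions \eqref{eq8}, \eqref{eq9} and \eqref{eq10} reach chemical equilibrium under condition \eqref{eq14}, the gate realizes $T([S_2^\prime]) = T([E_2]) \land T([E_3])$; accordingly I take the sequential mapping function to be the AND relationship $f_S(t) = [E_2](t)\,[E_3](t)$, with $f_{E_2} = [E_2]$ and $f_{E_3} = [E_3]$ as for the NOT gate. Fix an output error $0 < \kappa < 1$ and an arbitrary operating time $t_0 > 0$. If the concentrations meet the equilibrium condition at $t_0$ then $|S(t_0) - f_S(t_0)| \leqslant \kappa$ and there is nothing to prove; otherwise $|S(t_0) - f_S(t_0)| > \kappa$ and, writing $S = [S_2^\prime]$ and using mass conservation $[S_2] = 1 - [S_2^\prime]$, the state evolves according to $dS/dt = V_7 + V_8 - V_9$, i.e.
\[
\frac{dS}{dt} = \frac{V_{E_2}(1-S)}{K_{mE_2}+(1-S)} + \frac{V_{E_3}(1-S)}{K_{mE_3}+(1-S)} - \frac{V_{P_2}S}{K_{mP_2}+S},
\]
obtained from Equations \eqref{eq11}, \eqref{eq12} and \eqref{eq13}.

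Next I would split into the four Boolean input combinations and, for each, bound the time needed for $S$ to enter the $\kappa$-band around its equilibrium value. For the three combinations whose target output is logic $0$ --- at least one of $[E_2],[E_3]$ near $0$, so the corresponding production term is negligible --- the inequalities $V_{P_2} > V_{E_2}$ and $V_{P_2} > V_{E_3}$ from \eqref{eq14} force $dS/dt < 0$ while $S$ is not yet small, so $S(t)$ decreases monotonically toward a near-zero equilibrium. For the combination with both inputs logic $1$, the inequality $V_{P_2} < V_{E_2} + V_{E_3}$ from \eqref{eq14} makes the production terms dominate for small $S$, so $S(t)$ increases toward an equilibrium $S^\ast$ that Theorem \ref{theorem-5} places above $\tau_1$.

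Then, exactly as the auxiliary functions $h(t)$ and $l(t)$ were used in the proof of Theorem \ref{theorem-6}, I would linearize the ODE in each case by replacing each Michaelis--Menten denominator with a worst-case constant bound of the form $K_m + 1$, solve the resulting first-order linear equation in closed form, and read off the time at which the comparison function --- which by a one-sided comparison dominates (respectively is dominated by) $S(t)$ on $[t_0,\infty)$ --- reaches $1-\kappa$ in the ``output $1$'' case or $\kappa$ in the ``output $0$'' cases. This produces a finite $t_+$ and finite $t_-^{(1)}, t_-^{(2)}, t_-^{(3)}$; setting $\tau_S = t_{max} = \max(t_+, t_-^{(1)}, t_-^{(2)}, t_-^{(3)})$ and using that successive inputs of the gate are separated by intervals longer than $\tau_S$, so that $f_S(t_0+\tau_S) = f_S(t_0)$, yields $\forall t\,((|S(t)-f_S(t)|>\kappa) \rightarrow (|S(t+\tau_S)-f_S(t+\tau_S)|<\kappa))$, which is Definition 1.

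The step I expect to be the main obstacle is the ``both inputs $1$'' case. Unlike the NOT gate, where the target output is exactly $1$ and $S=1$ is the fixed point in the limiting regime, here $S=1$ is \emph{not} a fixed point of the ODE, so the comparison function must be centered at the true equilibrium $S^\ast \in (\tau_1,1)$ guaranteed by \eqref{eq14} and Theorem \ref{theorem-5}; I must check both that $S^\ast$ stays bounded away from $\tau_1$ (and close enough to $f_S$ that $|S^\ast - f_S| < \kappa$) and that the linearized lower-comparison ODE still subordinates $dS/dt$ on the whole interval $[t_0, t_0 + \tau_S]$, which hinges on choosing the constant replacing the moving denominator carefully. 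The three ``output $0$'' cases are routine variants of the NOT-gate computation and should present no real difficulty.
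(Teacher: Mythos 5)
Your plan follows the same strategy as the paper's proof --- take $f_S(t)=[E_2][E_3]$, invoke the rate equations \eqref{eq11}--\eqref{eq13}, and exhibit a finite $\tau_S$ within which the reaction re-enters the $\kappa$-band --- but you carry out substantially more of the argument than the paper does. The paper's own proof of Theorem \ref{theorem-7} performs none of the case analysis or comparison-function construction: it simply asserts that ``the chemical reaction can reach equilibrium within a given time $\tau_0$'' determined by $\kappa$, $K_m$ and $V_{max}$, transferring the entire burden to the NOT-gate computation of Theorem \ref{theorem-6} by analogy. Your four-case decomposition with linearized comparison ODEs is exactly the work that would be needed to make that assertion precise, so your route is the honest completion of the paper's sketch rather than a different proof.

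One caution: the obstacle you flag in the ``both inputs $1$'' case is genuine and is not resolved by the paper either. Condition \eqref{eq14} together with Theorem \ref{theorem-5} only pins down the equilibrium in the limits $[E_2],[E_3]\to 0$ or $1$; for the actual enzyme concentrations the equilibrium $S^\ast$ determined by \eqref{eq_AND_OR} is strictly interior to $(0,1)$, and $|S^\ast-f_S|<\kappa$ is an additional quantitative requirement on $V_{P_2}$, $V_{E_2}+V_{E_3}$ and the Michaelis constants relative to $\kappa$, not a consequence of the stated hypotheses. Note also that the same subtlety infects the three ``output $0$'' cases you call routine: with one enzyme inserted the forward rate at $S=0$ is strictly positive, so the equilibrium there is likewise bounded away from $0$, exactly as in the NOT gate with $[E_1]=1$. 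So either an explicit parameter condition ensuring all four equilibria lie within $\kappa$ of their Boolean targets must be added, or $\kappa$ must be taken large enough relative to the reaction constants; with that proviso your outline closes the argument.
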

\begin{proof}
    According to Theorem \ref{theorem-5}, the biochemical AND gate includes reactions in Equations \eqref{eq8}, \eqref{eq9}, and \eqref{eq10}, with inputs being the concentrations of $E_2([E_2 ])$ and $E_3([E_3 ])$ and the output being the concentration of $S_2^\prime([S_2^\prime])$. Let the functions of the two inputs of the biochemical AND gate, which vary with reaction time, be defined as $I_1(t)=[E_2]$ and $I_2(t)=[E_3]$, and the output as $S(I_1,I_2,t)=[S_2^\prime]$. The sequential mapping function of the biochemical AND gate is given by the logical AND relationship: $f_S(f_{I_1},f_{I_2}, t)=[E_2] [E_3]$. For convenience of discussion, in the following $S(I_1,I_2,t)$ is abbreviated as $S(t)$ and $f_S(f_{I_1},f_{I_2},t)$ as $f_S(t)$. Thus, $S(t)=[S_2^\prime]$ and $f_S(t) = [E_2] [E_3]$ hold. As seen in Theorem 5, when the chemical reaction reaches equilibrium, the concentrations satisfy: $\left[S_2^\prime\right]=[E_2] [E_3]$. Let the output error $\kappa$ be a positive number. If time $t$ satisfies chemical reaction equilibrium, then: $|S(t)-f_S(t)|\leqslant \kappa$. If $t$ does not satisfy equilibrium, the reaction will proceed towards equilibrium, causing changes in the output $S$. The relationship of the output over time $S(t)$ is defined by the Equations \eqref{eq11}, \eqref{eq12}, and \eqref{eq13}. From these equations, we can conclude that the chemical reaction can reach equilibrium within a given time $\tau_0$, which is defined by $\kappa$ and the reaction parameters $K_m$ and $V_{max}$ in Equations \eqref{eq11}, \eqref{eq12}, and \eqref{eq13}. If time $t$ does not satisfy equilibrium, we have: $|S(t+\tau_0)-f_S(t+\tau_0)|<\kappa$. Thus, there exists $\tau_S=\tau_0>0$ such that the output $S$ of the biochemical AND gate satisfies: $\forall t  (  (|S(t)-f_S(t)|>\kappa) \rightarrow (|S(t+\tau_S)-f_S(t+\tau_S)|<\kappa)  ) $. According to Definition 1 the biochemical AND gate satisfies sequential mapping.
\end{proof}

Given that both the biochemical NOT and AND gates have sequential mapping, we conclude that their cascade also satisfies this property based on Lemma \ref{lem-2}. Since all Boolean logic based on biochemical computation can be implemented through the cascade of biochemical NOT and AND gates, the biochemical computing mode also demonstrates sequential mapping.

\begin{theorem}
    Biochemical circuits satisfy sequential mapping.
    \label{theorem-8}
\end{theorem}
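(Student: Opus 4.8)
The plan is to reduce the statement to three facts already in hand: (i) the biochemical NOT gate satisfies sequential mapping (Theorem~\ref{theorem-6}); (ii) the biochemical AND gate satisfies sequential mapping (Theorem~\ref{theorem-7}); and (iii) cascading preserves sequential mapping (Lemma~\ref{lem-2}). Since $\{\mathrm{NOT},\mathrm{AND}\}$ is functionally complete, every Boolean function realized by a biochemical circuit can be rewritten as a composition of biochemical NOT and AND gate components, so it suffices to show that any such composition inherits the property. I would first fix what ``biochemical circuit'' means here, namely a finite \emph{acyclic} interconnection of biochemical NOT and AND components in which the output concentration signal $[S_1]$ or $[S_2^\prime]$ of one gate is fed as an (immobilized-enzyme) input signal to later gates, with arbitrary fan-out of signals allowed; the feedback needed for genuine sequential elements is supplied separately through Theorem~\ref{theorem-1}, not here.

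The core argument is structural induction on the number of gates in the circuit DAG. The base case is a single gate, handled by Theorems~\ref{theorem-6} and~\ref{theorem-7}. For the inductive step I would take a topological ordering and single out the last gate $G$; its inputs are each either a primary input $X_i$ or the output of a proper sub-circuit, and each such sub-circuit has strictly fewer gates, hence satisfies sequential mapping by the induction hypothesis. Applying Lemma~\ref{lem-2} then gives that $G$ composed with these sub-circuits again satisfies sequential mapping, with the composite propagation delay $\tau$ built from the delays of $G$ and of the sub-circuits (taking a maximum over parallel branches and accumulating along the path to $G$). Chaining this over the whole circuit yields sequential mapping for the circuit's input-output map, with some finite $\tau_S$.

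The main obstacle is that Lemma~\ref{lem-2} is stated only for the two-stage cascade $S^\prime=G(S)$ with a \emph{single} feeding circuit $S$, whereas a general circuit has two-input gates, shared inputs (fan-out), and several parallel sub-circuits feeding one gate. I would bridge this with a minor generalization: bundle the sub-circuits feeding $G$ into one circuit $S$ whose output is the tuple of their outputs, observing that this bundled circuit satisfies sequential mapping with $\tau_S=\max_j\tau_{S^{(j)}}$ and the error bound $\kappa$ enforced coordinatewise, and then invoke Lemma~\ref{lem-2} once for $G(S)$. A secondary point to verify is that the threshold semantics compose correctly --- once an inner gate's concentration output has settled to within $\kappa$, it must be a legitimate logical input for the next gate --- which is exactly what Definition~\ref{def-2} and the threshold function in Equation~\eqref{eqa1} guarantee, so I would cite these rather than re-derive them.

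Finally I would conclude: because every Boolean logic function admits a realization as a cascade of biochemical NOT and AND gates, and every such cascade satisfies sequential mapping by the induction above, every biochemical circuit satisfies sequential mapping. Combined with Theorem~\ref{theorem-1} --- biochemical circuits also possess NOT-AND computing capability --- this establishes that the biochemical computing mode can realize sequential logic circuits, mirroring Corollary~\ref{lemma-3} for electronic circuits.
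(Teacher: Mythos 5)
Your proposal is correct and follows essentially the same route as the paper's own proof: invoke Theorems~\ref{theorem-6} and~\ref{theorem-7} for the component gates, Lemma~\ref{lem-2} for cascading, and the functional completeness of $\{\mathrm{NOT},\mathrm{AND}\}$. Your explicit structural induction and the bundling of parallel sub-circuits merely make rigorous a step the paper's two-line argument leaves implicit, so the underlying approach is the same.
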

\begin{proof}
    Based on Theorems \ref{theorem-6} and \ref{theorem-7}, the biochemical NOT and AND gates satisfy sequential mapping. Two biochemical circuits that satisfy the sequential mapping are cascaded to obtain a new circuit, forming a new cascade mapping. According to Lemma \ref{lem-2}, the new cascade mapping also satisfies the sequential mapping. Since all Boolean functions can be constructed by cascading mappings using only NOT gates and AND gates, biochemical circuits satisfy sequential mapping as well.
\end{proof}
\begin{theorem}
    Biochemical logic circuits are capable of realizing sequential logic circuits.
\end{theorem}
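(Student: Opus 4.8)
The plan is to reproduce, in the biochemical setting, the argument behind Corollary \ref{lemma-3}, since Theorem \ref{theorem-1} already packages the two hypotheses that must be verified: that the circuit family (a) satisfies sequential mapping, and (b) can realize the computing capability of a NOT-AND gate. Hypothesis (a) is handed to us directly by Theorem \ref{theorem-8}, so the only substantive task is to establish (b) for biochemical circuits.

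For (b) I would argue as follows. Section 3 already exhibits a biochemical NOT gate (Definition \ref{def-2}, validated by Theorem \ref{theorem-3}) whose thresholded output satisfies $T([S_1]) = \neg T([E_1])$, and a biochemical AND gate (Definition \ref{def-4}, validated by Theorem \ref{theorem-5}) whose thresholded output satisfies $T([S_2^\prime]) = T([E_2]) \land T([E_3])$. A NOT-AND gate is the functional composition $\neg(x \land y)$, so I would obtain it by cascading a biochemical AND gate into a biochemical NOT gate: the product concentration $[S_2^\prime]$ of the AND stage drives the input of the NOT stage (through an interface reaction converting this species into the enzyme-type input of the NOT stage, or by identifying $S_2^\prime$ with such a species), and since the Boolean values on every wire are read off by the same threshold function $T(\cdot)$, the cascade computes $\neg\big(T([E_2]) \land T([E_3])\big)$. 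Hence biochemical circuits exhibit NOT-AND computing capability, and by cascading such components they realize arbitrary Boolean logic.

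Finally I would invoke Theorem \ref{theorem-1}: biochemical circuits satisfy sequential mapping by Theorem \ref{theorem-8} and possess NOT-AND gate computing capability by the construction just described, so they realize sequential logic circuits; concretely, a cross-coupled pair of biochemical NOT-AND gates yields an RS flip-flop exactly as in the proof of Theorem \ref{theorem-1}. The step I expect to be the main obstacle is not the logical skeleton — which is essentially a transcription of Corollary \ref{lemma-3} — but justifying that the AND-into-NOT cascade, together with the feedback wire present in the flip-flop, still satisfies sequential mapping as a single combined unit. This is precisely the closure property supplied by Lemma \ref{lem-2} (and already used inside Theorem \ref{theorem-8}), so the argument must make the identification $S^\prime = G(S)$ of that lemma with $S$ the AND gate and $G$ the NOT gate, and note that interfacing substrate-concentration outputs with enzyme-concentration inputs does not violate the hypotheses of Lemma \ref{lem-2}. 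Once that is granted the theorem follows immediately.
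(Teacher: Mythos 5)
Your proposal is correct and follows essentially the same route as the paper: the paper's proof likewise verifies the two hypotheses of Theorem \ref{theorem-1} by citing Theorems \ref{theorem-3} and \ref{theorem-5} for NOT/AND capability and Theorem \ref{theorem-8} for sequential mapping, then concludes. You are in fact somewhat more careful than the paper, which never explicitly builds the NOT-AND gate as an AND-into-NOT cascade nor addresses the substrate-output-to-enzyme-input interface that such a cascade requires.
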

\begin{proof}
    Theorem \ref{theorem-3} and Theorem \ref{theorem-5} respectively demonstrate that biochemical logic circuits possess the computational capabilities of NOT and AND gates. Theorem \ref{theorem-8} establishes that biochemical logic circuits satisfy sequential mapping. Therefore, based on Theorem \ref{theorem-2}, it can be concluded that biochemical logic circuits are capable of realizing sequential logic circuits.
\end{proof}

In conclusion, biochemical logic gates based on the biochemical computing mode not only meet the truth tables of logic gates in static input-output characteristics, but also satisfy sequential mapping in dynamic input-output characteristics. Therefore, biochemical logic gates can produce correct dynamic outputs when receiving dynamic inputs, which demonstrates the capability to continuously process input sequences. This feature solves the problem of irreversible substrate consumption in chemical signaling, which provides a foundation for realizing biochemical sequential logic circuits. The foundation of a general-purpose Turing machine is the state machine—a sequential logic circuit that operates continuously, responding to dynamically changing signals \cite{sep-turing-machine}. Hence, the biochemical computing mode proposed in this study is of great importance for the development of a general-purpose biochemical reaction computer.

\section{Conclusion}
This paper proposed the concept of sequential mapping, which is one of the important prerequisites for realizing sequential logic circuits. Sequential logic circuits are a critical condition for the computing model to become a general-purpose computer, which can ensure the continuous repeated computing and storage capabilities of the computer. Focusing on sequential logic, this paper designed a new computing mode and carriers(e.g., substrates and enzymes) from a biochemical perspective. Considering the characteristics of enzymes such as non-consumption, specificity, and high-efficiency catalysis, biochemical computing mode proposed in this paper based on enzymatic reactions also exhibits the aforementioned potential advantages. Meanwhile, this paper demonstrated that biochemical computing mode satisfies sequential mapping, similar to electronic computers. Consequently, from a theoretical perspective, the proposed mode introduces a novel carrier for computer science and can further implement other biochemical computational elements, constructing biochemical reaction-based computers with general computing capabilities.

While this paper has not yet conducted specific biochemical reaction experiments, we theoretically demonstrate that biochemical computing mode satisfies sequential mapping, verifying its feasibility. With the continuous advancement of science and technology, future research will further explore and realize realistic biochemical logic circuits and prototype devices based on biochemical computing mode.

\section*{Acknowledgments}
We gratefully acknowledge the financial support of Guangdong Taiyi High-tech Development Co., Ltd.

\bibliography{BMBbib}

\end{document}